\DeclareMathOperator*{\argmin}{arg\,min}
\newtheorem{Th}{Theorem} 
\theoremstyle{definition}
\theoremstyle{remark}
\newtheorem{rmk}{Remark}
\title{Blind nonnegative source separation using biological neural networks}
\author[1]{Cengiz Pehlevan} 
\author[1,2]{Sreyas Mohan} 
\author[1,3]{Dmitri B. Chklovskii} 
\affil[1]{Center for Computational Biology, Flatiron Institute, New York, NY}
\affil[2]{IIT Madras, Chennai, India}
\affil[3]{NYU Medical School, New York, NY}
 \date{\vspace{-5ex}}
\begin{document}

\maketitle

\begin{abstract}
Blind source separation, i.e. extraction of independent sources from a mixture, is an important problem for both artificial and natural signal processing. Here, we address a special case of this problem when sources (but not the mixing matrix) are known to be nonnegative, for example, due to the physical nature of the sources. We search for the solution to this problem that can be implemented using biologically plausible neural networks. Specifically, we consider the online setting where the dataset is streamed to a neural network. The novelty of our approach is that we formulate blind nonnegative source separation as a similarity matching problem and derive neural networks from the similarity matching objective. Importantly, synaptic weights in our networks are updated according to biologically plausible local learning rules. 
\end{abstract}

\section{Introduction}

Extraction of latent causes, or sources, from complex stimuli is essential for making sense of the world. Such stimuli could be mixtures of sounds, mixtures of odors, or natural images.  If supervision, or ground truth, about the causes is lacking, the problem is known as blind source separation. 

The blind source separation problem can be solved by assuming a generative model, wherein the observed stimuli are linear combinations of independent sources, an approach known as Independent Component Analysis (ICA) \citep{jutten1991blind,comon1994independent,bell1995information,hyvarinen2000independent}. Formally, the stimulus at time $t$ is expressed as a $k$-component vector
\begin{align}\label{genmod}
{\bf x}_t ={\bf A} {\bf s}_t,
\end{align}
where ${\bf A}$ is an unknown but time-independent $k\times d$ mixing matrix and ${\bf s}_t$ represents the signals of $d$ sources at time $t$. In this paper we assume that $k\geq d$. 

The goal of ICA is to infer source signals, ${\bf s}_t$, from the stimuli ${\bf x}_t$. Whereas many ICA algorithms have been developed by the signal processing community \citep{comon2010handbook}, most of them cannot be implemented by biologically plausible neural networks. Yet, our brains can solve the blind source separation problem effortlessly \citep{bronkhorst2000cocktail,asari2006sparse,narayan2007cortical,bee2008cocktail,mcdermott2009cocktail,mesgarani2012selective,golumbic2013mechanisms,isomura2015cultured}. Therefore, discovering a biologically plausible ICA algorithm is an important problem.

For an algorithm to be implementable by biological neural networks it must satisfy (at least) the following requirements. First, it must operate in the online (or streaming) setting. In other words, the input dataset is not available as a whole but is streamed one data vector at a time and the corresponding output must be computed before the next data vector arrives. Second, the output of most neurons in the brain (either a firing rate, or the synaptic vesicle release rate) is nonnegative. Third, the weights of synapses in a neural network must be updated using local learning rules, i.e. depend on the activity of only the corresponding pre- and postsynaptic neurons. 

Given the nonnegative nature of neuronal output we consider a special case of ICA where sources are assumed to be nonnegative, termed Nonnegative Independent Component Analysis (NICA),  \citep{plumbley2001adaptive,plumbley2002conditions}. Of course, to recover the sources, one can use standard ICA algorithms that don't rely on the nonnegativity of sources, such as fastICA \citep{hyvarinen1997fast, hyvarinen1999fast, hyvarinen2000independent}. Neural learning rules have been proposed for ICA, e.g. \citep{linsker1997local,eagleman2001cerebellar,isomura2016local} and references within. However, taking into account nonnegativity may lead to simpler and more efficient algorithms \citep{plumbley2001adaptive,plumbley2003algorithms,plumbley2004nonnegative,oja2004blind,yuan2004fastica,zheng2006nonnegative,ouedraogo2010non,li2016recovery}. 

While most of the existing NICA algorithms have not met the biological plausibility requirements, in terms of online setting and local learning rules, there are two notable exceptions. First, \cite{plumbley2001adaptive} succesfully simulated a neural network on a small dataset, yet no theoretical analysis was given. Second, \cite{plumbley2003algorithms} and \cite{plumbley2004nonnegative} proposed a nonnegative PCA algorithm for a streaming setting, however its neural implementation requires nonlocal learning rules. Further, this algorithm requires prewhitened data (see also below), yet no streaming whitening algorithm was given.

Here, we propose a biologically plausible NICA algorithm. The novelty of our approach is that the algorithm is derived from the similarity matching principle which postulates that neural circuits map more similar inputs to more similar outputs \citep{pehlevan2015MDS}. Previous work proposed various objective functions to find similarity matching neural representations and solved these optimization problems with biologically plausible neural networks \citep{pehlevan2015MDS,pehlevan2015normative,pehlevan2014NMF,hu2014SMF,pehlevan2015optimization}. Here we apply these networks to NICA. 

The rest of the paper is organized as follows:  In Section \ref{offline}, we show that blind source separation, after a generalized prewhitening step, can be posed as a nonnegative similarity matching (NSM) problem \citep{pehlevan2014NMF}. In Section \ref{online}, using results from \citep{pehlevan2015normative,pehlevan2014NMF} we show that both the generalized prewhitening step and the NSM step can be solved online by neural networks with local learning rules. Stacking these two networks leads to the two-layer NICA network. In Section \ref{numerical}, we compare the performance of our algorithm to other ICA and NICA algorithms for various datasets.

\section{Offline NICA via NSM}\label{offline}

In this section, we first review Plumbley's analysis of NICA  and then reformulate NICA as an NSM problem.
 
\subsection{Review of Plumbley's analysis}

When source signals are nonnegative, the source separation problem simplifies. It can be solved in two straightforward steps: noncentered prewhitening and orthonormal rotation \citep{plumbley2002conditions}. 

Noncentered prewhitening transforms ${\bf x}$ to ${\bf h} := {\bf F}{\bf x}$, where ${\bf h} \in {\mathbb R}^d$ and ${\bf F}$ is a $d \times k$ whitening matrix\footnote{In his analysis Plumbley \citep{plumbley2002conditions} assumed $k=d$ (mixture channels are the same as source channels) but this assumption can be relaxed as shown.}, such that ${\bf C}_{\bf h} := \left<\left({\bf h}-\left<{\bf h}\right>\right)\left({\bf h}-\left<{\bf h}\right>\right)^\top\right> = {\bf I}_{d}$,  where angled brackets denote an average over the source distribution and ${\bf I}_{d}$ is the $d\times d$ identity matrix. Note that the mean of ${\bf x}$ is not removed in the tranformation, otherwise one would not be able to use the constraint that the sources are nonnegative \citep{plumbley2003algorithms}. 

Assuming that sources have unit variance\footnote{Without loss of generality, a scalar factor that multiplies a source can always be absorbed into the corresponding column of the mixing matrix}, 
\begin{align}\label{white}
{\bf C}_{\bf s} :=\left<\left({\bf s}-\left<{\bf s}\right>\right)\left({\bf s}-\left<{\bf s}\right>\right)^\top \right>= {\bf I}_{d},
\end{align}
the combined effect of source mixing and prewhitening ${\bf FA}$ (${\bf h} = {\bf Fx} = {\bf FAs}$) is an orthonormal rotation. To see this, note that, by definition, ${\bf C}_{\bf h} = \left({\bf FA}\right){\bf C}_s\left( {\bf FA} \right)^{\top} = \left({\bf FA}\right)\left( {\bf FA} \right)^{\top}$ and ${\bf C}_{\bf h} = {\bf I}_{d}$. 

The second step of NICA relies on the following observation \citep{plumbley2002conditions}:
\begin{Th}[Plumbley] Suppose sources are independent, nonnegative and well-grounded, i.e. {\rm Prob}$\left(s_i<\delta\right) > 0$ for any $\delta>0$. Consider an orthonormal transformation ${\bf y} = {\bf Q}{\bf s}$. Then ${\bf Q}$ is a permutation matrix  with probability 1, if and only if ${\bf y}$ is nonnegative. 
\end{Th}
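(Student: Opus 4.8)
The plan is to prove the two implications separately; the forward direction is immediate and the converse carries essentially all the work.

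\emph{The ``only if'' direction.} If $\mathbf{Q}$ is a permutation matrix, then $\mathbf{y}=\mathbf{Q}\mathbf{s}$ merely reorders the entries of $\mathbf{s}$, so the nonnegativity of $\mathbf{s}$ (which holds with probability $1$) transfers verbatim to $\mathbf{y}$. I would state this in one line.

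\emph{The ``if'' direction, Stage 1: every entry of $\mathbf{Q}$ is nonnegative.} Fix a row index $i$ and suppose, toward a contradiction, that $Q_{ij}<0$ for some $j$. Since the sources have unit variance they are not almost surely zero, so there is $\epsilon>0$ with $\mathrm{Prob}(s_j>\epsilon)>0$; by well-groundedness, $\mathrm{Prob}(s_l<\delta)>0$ for every $l\neq j$ and every $\delta>0$; and by independence the event $\{s_j>\epsilon\}\cap\bigcap_{l\neq j}\{s_l<\delta\}$ has positive probability. On that event,
\begin{align}
y_i \;=\; Q_{ij}s_j+\sum_{l\neq j}Q_{il}s_l \;<\; Q_{ij}\epsilon+\delta\sum_{l\neq j}|Q_{il}| \;\le\; Q_{ij}\epsilon+\delta\sqrt{d-1},
\end{align}
where the last bound is Cauchy--Schwarz applied to the unit-norm row $i$ of $\mathbf{Q}$. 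Choosing $\delta$ small enough that $\delta\sqrt{d-1}<|Q_{ij}|\epsilon$ forces $y_i<0$ on a set of positive probability, contradicting the assumed nonnegativity of $\mathbf{y}$. Hence $\mathbf{Q}\ge 0$ entrywise.

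\emph{Stage 2: an orthogonal matrix with nonnegative entries is a permutation matrix.} From $\mathbf{Q}\mathbf{Q}^\top=\mathbf{I}_d$, distinct rows $i\neq i'$ satisfy $\sum_k Q_{ik}Q_{i'k}=0$; since all summands are nonnegative, $Q_{ik}Q_{i'k}=0$ for every $k$, i.e.\ the supports of the rows are pairwise disjoint subsets of $\{1,\dots,d\}$. Each row is nonzero (it has unit norm), so the $d$ disjoint supports partition $\{1,\dots,d\}$ into singletons, and the lone nonzero entry of each row equals $1$ by unit norm. Therefore $\mathbf{Q}$ is a permutation matrix. I expect Stage 1 to be the main obstacle: it is where well-groundedness, independence, and non-degeneracy of the sources must be combined to engineer a positive-probability event on which a single negative entry of $\mathbf{Q}$ produces a negative output; Stage 2 and the forward direction are routine. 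One should also be explicit that all probabilistic statements are with respect to the source distribution, so that ``$\mathbf{y}$ nonnegative'' means $\mathbf{y}\ge 0$ almost surely and the conclusion ``$\mathbf{Q}$ is a permutation matrix'' is then deterministic.
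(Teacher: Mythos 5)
Your proof is correct. Note that the paper itself does not prove this theorem---it is quoted from Plumbley (2002)---so there is no in-paper argument to compare against; your two-stage route (first force every entry of $\mathbf{Q}$ to be nonnegative via a positive-probability event concentrated near a coordinate axis, then observe that a nonnegative orthogonal matrix must be a permutation) is the standard one and matches Plumbley's original argument in substance. Two small points are worth making explicit. First, you correctly noticed that well-groundedness alone does not rule out a degenerate source $s_j\equiv 0$ (for which the conclusion would fail), and you patched this by invoking the unit-variance normalization; that assumption is not in the theorem statement itself but is part of the paper's standing setup (its Eq.~(2)), so the appeal is legitimate and should be flagged exactly as you did. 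Second, in Stage~1 the bound $\sum_{l\neq j}Q_{il}s_l\le \delta\sum_{l\neq j}\lvert Q_{il}\rvert$ silently uses $s_l\ge 0$ on the chosen event; since nonnegativity of the sources holds only almost surely, one should intersect the event with the full-measure set $\{s_l\ge 0\ \forall l\}$, which does not affect its positive probability. With those remarks the argument is complete.
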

In the second step, we look for an orthonormal ${\bf Q}$ such that ${\bf y} = {\bf Q}{\bf h}$ is nonnegative. When found, Plumbley's theorem guarantees that $ {\bf Q}{\bf F}{\bf x}$ is a permutation of the sources. Several algorithms have been developed based on this observation \citep{plumbley2003algorithms,plumbley2004nonnegative,oja2004blind,yuan2004fastica}.

Note that only the sources ${\bf s}$ but not necessarily the mixing matrix ${\bf A}$ must be nonnegative. Therefore, NICA allows generative models, where features not only add up but also cancel each other, as in the presence of a shadow in an image \citep{plumbley2002conditions}. In this respect, NICA is more general than Nonnegative Matrix Factorization (NMF) \citep{lee1999learning,paatero1994positive} where both the sources and the mixing matrix are required to be nonnegative.

\subsection{NICA as NSM}

\begin{figure}
	\centering
	\includegraphics{./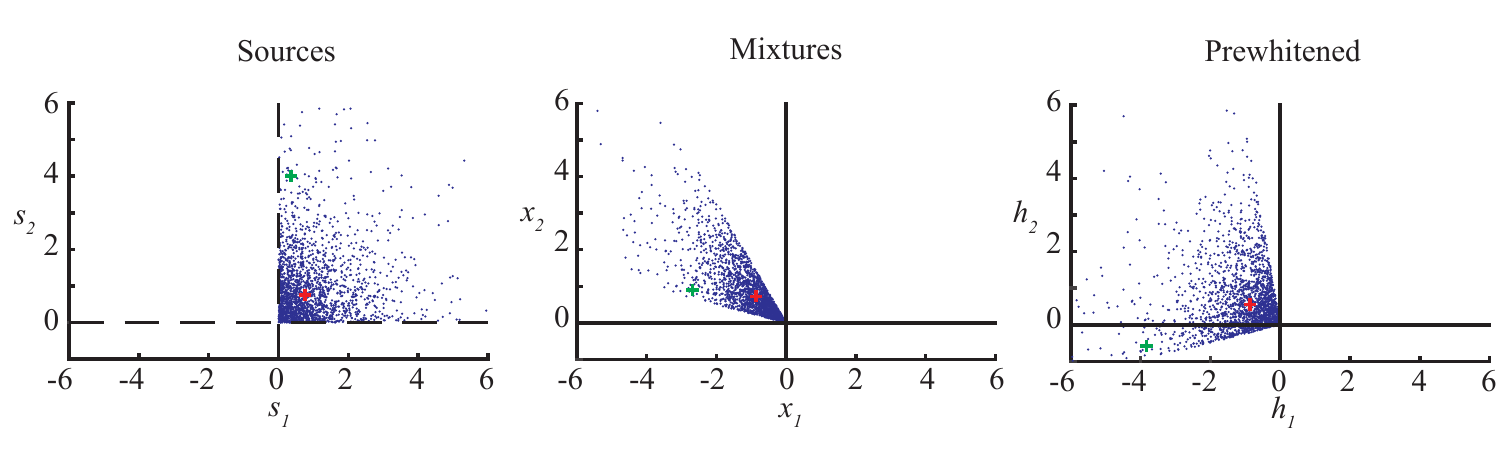}
	\caption{Illustration of the NICA algorithm. Two source channels (left) are linearly transformed to a two-dimensional mixture (middle). Prewhitening (right) gives an orthogonal transformation of the sources. Sources are then recovered by solving the NSM problem \eqref{NSM}. Green and red plus signs track two source vectors across mixing and prewhitening stages. \label{Fig1}}
\end{figure}

Next we reformulate NICA as a NSM problem. This reformulation will allow us to derive an online neural network for NICA in Section \ref{streaming}. Our main departure from Plumbley's analysis is to work with similarity matrices rather than covariance matrices and finite number of samples rather than the full probability distribution of the sources.

First, let us switch to the matrix notation, where data matrices are formed by concatenating data column vectors, e.g. ${\bf X} = [{\bf x}_1, {\bf x}_2,...,{\bf x}_t]$ so that ${\bf X}\in{\mathbb R}^{k\times t}$, and ${\bf S} = [{\bf s}_1, {\bf s}_2,...,{\bf s}_t]$ so that ${\bf S}\in{\mathbb R}^{d\times t}$. In this notation, we introduce a time-centering operation $\delta$ such that, for example, time-centered stimuli are $\delta{\bf X}:= {\bf X}-\bar{\bf X}$ where $\bar {\bf X} := {\bf X}\frac 1t {\bf 1}{\bf 1}^\top$ and ${\bf 1}$ is a $t$ dimensional column vector whose elements are all 1's.

Our goal is to recover ${\bf S}$ from ${\bf X} = {\bf A}{\bf S}$, where ${\bf A}$ is unknown. We make the following two assumptions:
First, sources are nonnegative and decorrelated, $\frac 1t \delta{\bf S}\,\delta{\bf S}^\top = {\bf I}_d$. Note that while general ICA and NICA problems are stated with the independence assumption on the sources, for our purposes, it is sufficient that they are only decorrelated. Second, the mixing matrix, ${\bf A}  \in {\mathbb R}^{k\times d}$  $(k\geq d)$, is full-rank.

We propose that the source matrix, ${\bf S}$, can be recovered from ${\bf X}$ in the following two steps, also illustrated in Fig. \ref{Fig1}:
\begin{enumerate}[leftmargin=*]
\item \textbf{Generalized Prewhitening:} Transform ${\bf X}$ to ${\bf H} = {\bf F}{\bf X}$, where ${\bf F}$ is $l\times k$ with $l\geq d$, so that $\frac{1}{t}\delta{\bf H} \, \delta{\bf H}^\top$
has $d$ unit eigenvalues and $l-d$ zero eigenvalues. When $l=d$, ${\bf H}$ is whitened, otherwise channels of ${\bf H}$ are correlated. Such prewhitening is useful because it implies ${\bf H}^\top{\bf H} =  {\bf S}^\top{\bf S}$ according to the following theorem.
\begin{Th}\label{th2} If ${\bf F} \in {\mathbb R}^{l\times k} (l\geq k)$ is such that ${\bf H}={\bf F}{\bf X}$ obeys 
\begin{align}\label{hhbar}
\frac{1}{t}\delta{\bf H} \, \delta{\bf H}^\top = {\bf U}^{H}{\bf \Lambda}^{H}{{\bf U}^{H}}^\top, 
\end{align}
an eigenvalue decomposition with ${\bf \Lambda}^{H} = {\rm diag} \big(\underset{d}{\underbrace{1,\ldots,1}},\underset{l-d}{\underbrace{0,\ldots,0}} \big)$, then 
\begin{align}
{\bf H}^\top{\bf H} =  {\bf S}^\top{\bf S}.
\end{align}

\end{Th}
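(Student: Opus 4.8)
The plan is to collapse everything onto the combined matrix ${\bf G} := {\bf F}{\bf A} \in {\mathbb R}^{l\times d}$ and then read off the conclusion from its singular value decomposition. First I would note that since ${\bf X} = {\bf A}{\bf S}$ we have ${\bf H} = {\bf F}{\bf X} = {\bf G}{\bf S}$, and because time-centering acts by right multiplication with the projector ${\bf P} := {\bf I}_t - \tfrac1t{\bf 1}{\bf 1}^\top$, it also gives $\delta{\bf H} = {\bf H}{\bf P} = {\bf G}\,\delta{\bf S}$. Hence
\begin{align}
\tfrac1t\,\delta{\bf H}\,\delta{\bf H}^\top = {\bf G}\left(\tfrac1t\,\delta{\bf S}\,\delta{\bf S}^\top\right){\bf G}^\top = {\bf G}{\bf G}^\top ,
\end{align}
where the last equality uses the decorrelation assumption $\tfrac1t\,\delta{\bf S}\,\delta{\bf S}^\top = {\bf I}_d$. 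So the hypothesis \eqref{hhbar} says precisely that ${\bf G}{\bf G}^\top$ is a symmetric matrix with $d$ eigenvalues equal to $1$ and $l-d$ equal to $0$; in particular ${\rm rank}({\bf G}{\bf G}^\top) = d$, so ${\rm rank}({\bf G}) = d$ and ${\bf G}$ has full column rank.

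Next I would feed this into the SVD ${\bf G} = {\bf U}{\bf \Sigma}{\bf V}^\top$ with ${\bf U}\in{\mathbb R}^{l\times l}$ and ${\bf V}\in{\mathbb R}^{d\times d}$ orthogonal and ${\bf \Sigma}\in{\mathbb R}^{l\times d}$ carrying the singular values $\sigma_1,\dots,\sigma_d$ on its diagonal. Then ${\bf G}{\bf G}^\top = {\bf U}({\bf \Sigma}{\bf \Sigma}^\top){\bf U}^\top$ is an eigendecomposition whose spectrum is $\{\sigma_1^2,\dots,\sigma_d^2,0,\dots,0\}$; comparing with \eqref{hhbar} forces $\sigma_i^2 = 1$, i.e. $\sigma_i = 1$, for every $i$ — this is where full column rank of ${\bf G}$ is used, ruling out a vanishing $\sigma_i$. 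Consequently ${\bf \Sigma}^\top{\bf \Sigma} = {\bf I}_d$ and
\begin{align}
{\bf G}^\top{\bf G} = {\bf V}({\bf \Sigma}^\top{\bf \Sigma}){\bf V}^\top = {\bf V}{\bf V}^\top = {\bf I}_d .
\end{align}
(Equivalently one can cite the standard fact that ${\bf G}{\bf G}^\top$ and ${\bf G}^\top{\bf G}$ share the same nonzero eigenvalues with multiplicity; since ${\bf G}^\top{\bf G}$ is $d\times d$ of rank $d$ with all $d$ nonzero eigenvalues equal to $1$, it is symmetric with every eigenvalue $1$, hence equals ${\bf I}_d$.) Substituting back, ${\bf H}^\top{\bf H} = ({\bf G}{\bf S})^\top({\bf G}{\bf S}) = {\bf S}^\top({\bf G}^\top{\bf G}){\bf S} = {\bf S}^\top{\bf S}$, which is the claim.

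I do not expect a genuine obstacle here; the proof is a short linear-algebra computation. The one point that requires care is the rank bookkeeping: one must check that the ``$d$ unit eigenvalues'' part of \eqref{hhbar}, together with the shape constraint ${\bf G}\in{\mathbb R}^{l\times d}$, truly pins all $d$ singular values of ${\bf G}$ to $1$ (rather than allowing a zero singular value compensated by a larger one elsewhere), which it does because ${\bf G}$ can have at most $d$ nonzero singular values and \eqref{hhbar} supplies exactly $d$ of them.
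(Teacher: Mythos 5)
Your proposal is correct and follows essentially the same route as the paper: both reduce the hypothesis to $({\bf F}{\bf A})({\bf F}{\bf A})^\top$ having $d$ unit and $l-d$ zero eigenvalues, pass to the SVD of ${\bf F}{\bf A}$ to conclude all $d$ singular values equal $1$, deduce $({\bf F}{\bf A})^\top({\bf F}{\bf A})={\bf I}_d$, and substitute into ${\bf H}^\top{\bf H}={\bf S}^\top({\bf F}{\bf A})^\top({\bf F}{\bf A}){\bf S}$. Your explicit rank bookkeeping (that ${\bf G}$ has at most $d$ nonzero singular values, so the $d$ unit eigenvalues pin all of them to $1$) is a point the paper states without elaboration, so your write-up is if anything slightly more careful.
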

\begin{proof} 
To see why \eqref{hhbar} is sufficient, first note that $\frac{1}{t}\delta{\bf H} \,\delta{\bf H}^\top = \left({\bf FA}\right)\left({\bf FA}\right)^\top$. This follows from the definition of ${\bf H}$ and \eqref{white}. If \eqref{hhbar} holds, then 
\begin{align}\label{FA}
\left({\bf FA}\right)\left({\bf FA}\right)^\top =  {\bf U}^{H}{\bf \Lambda}^{H}{{\bf U}^{H}}^\top.
\end{align}
In turn, this is sufficient to prove that $\left({\bf FA}\right)^\top\left({\bf FA}\right)={\bf I}_d$. To see that, assume an SVD decomposition of $\left({\bf FA}\right) = {\bf U}^{FA}{\bf \Lambda}^{FA}{{\bf V}^{FA}}^\top$. \eqref{FA} implies that ${\bf \Lambda}^{FA}{{\bf \Lambda}^{FA}}^\top = {\bf \Lambda}^H$, i.e. that the $d$ diagonal elements of ${\bf \Lambda}^{FA} \in {\mathbb R}^{l\times d}$ are all 1's. Then, 
\begin{align}\label{orthFA}
\left({\bf FA}\right)^\top\left({\bf FA}\right) =  {\bf I}_d.
\end{align}  
This gives us the desired results ${\bf H}^\top{\bf H} =  {\bf S}^\top\left({\bf FA}\right)^\top\left({\bf FA}\right){\bf S} = {\bf S}^\top{\bf S}$.
\end{proof}

\begin{rmk} If $l>d$, the channels of ${\bf H}$ are correlated, except in the special case ${\bf U}^H = {\bf I}_d$. The whitening used in Plumbley's analysis  \citep{plumbley2002conditions} requires $l=d$.
\end{rmk}

\item \textbf{NSM:} Solve the following optimization problem:
\begin{align}\label{NSM}
{\bf Y}^* = \argmin_{{\bf Y},\, {\bf Y}\geq0} \left\| {\bf H}^\top{\bf H}-{\bf Y}^\top{\bf Y}\right\|_F^2,
\end{align}
where the optimization is performed over nonnegative ${\bf Y} := \left[{\bf y}_1,\ldots,{\bf y}_t\right]$ i.e. ${\bf y}_i \in {\mathbb R}_+^d$. We call \eqref{NSM} the NSM cost function \citep{pehlevan2014NMF}. 
Because inner products quantify similarities we call ${\bf H}^\top{\bf H}$ and ${\bf Y}^\top{\bf Y}$ input and output similarity matrices, i.e. their elements hold the pairwise similarities between input and the pairwise similarities between output vectors, respectively. Then, the cost function \eqref{NSM} preserves the input similarity structure as much as possible under the nonnegativity constraint. Variants of \eqref{NSM} has been considered in applied math literature under the name ``nonnegative symmetric matrix factorization'' for clustering applications, e.g. \citep{kuang2012symmetric,kuang2015symnmf}. 

Now we make our key observation. Using Theorem \ref{th2}, we can rewrite \eqref{NSM} as
\begin{align}
{\bf Y}^* =\argmin_{{\bf Y},\, {\bf Y}\geq0} \left\| {\bf S}^\top{\bf S}-{\bf Y}^\top{\bf Y}\right\|_F^2.
\end{align}
Since both ${\bf S}$ and ${\bf Y}$ are nonnegative, rank-$d$ matrices, ${\bf Y}^* = {\bf P}{\bf S}$, where ${\bf P}$ is a permutation matrix, is a solution to this optimization problem and the sources are successfully recovered. 

Uniqueness of the solutions (up to permutations) is hard to establish.   While both sufficient conditions, and necessary and sufficient conditions for uniqueness exist, these are non-trivial to verify and usually the verification is NP-complete \citep{donoho2003does,laurberg2008theorems,huang2014non}. A review of related uniqueness results can be found in \citep{huang2014non}. A necessary condition for uniqueness given in \citep{huang2014non} states that, if the factorization of ${\bf S}^\top{\bf S}$ to ${\bf Y}^\top{\bf Y}$ is unique (up to permutations), then each row of ${\bf S}$ contains at least one element that is equal to $0$. This necessary condition is similar to Plumbley's well-groundedness requirement used in proving Theorem 1.

The NSM problem can be solved by projected gradient descent, 
\begin{align}\label{pgd}
{\bf Y} \longleftarrow \max \left({\bf Y} + \eta \left({\bf Y}{\bf H}^\top{\bf H}-{\bf Y}{\bf Y}^\top{\bf Y}\right),0\right),
\end{align}
where the $\max$ operation is applied elementwise, and $\eta$ is the size of the gradient step. Other algorithms can be found in \citep{kuang2012symmetric,kuang2015symnmf,huang2014non}.
\end{enumerate}

\section{Derivation of NICA neural networks from similarity matching objectives}\label{streaming} \label{online}

Our analysis in the previous section revealed that the NICA problem can be solved in two steps: generalized prewhitening and nonnegative similarity matching. Here, we derive neural networks for each of these steps and stack them to give a biologically plausible two-layer neural network that operates in a streaming setting.

In a departure from the previous section, the number of output channels is reduced to the number of sources at the prewhitening stage rather than the later NSM stage ($l=d$). This assumption simplifies our analysis significantly. The full problem is addressed in Appendix \ref{FullGenWhite}.

\subsection{Noncentered prewhitening in a streaming input setting}

To derive a neurally plausible online algorithm for prewhitening, we pose generalized prewhitening, Eq. \eqref{hhbar}, as an optimization problem. Online minimization of this optimization problem gives an algorithm that can be mapped to the operation of a biologically plausible neural network.

Generalized prewhitening solves a constrained similarity alignment problem:
\begin{align}\label{NIPS3A}
\max_{\delta{\bf H}} {\rm Tr}\left(\delta {\bf X}^\top\delta{\bf X}\,\delta{\bf H}^\top \delta{\bf H}\right) \qquad {\rm s.t.} \quad     {\delta \bf H}^\top {\delta \bf H}\preceq t{\bf I}_t,
\end{align}
where $\delta {\bf X}$ is the $k\times t$ centered mixture of $d$ independent sources and $\delta{\bf H}$ is a $d\times t$ matrix, constrained such that $t{\bf I}_t-{\delta \bf H}^\top {\delta \bf H}$ is positive semidefinite. The solution of this objective aligns similarity matrices $\delta {\bf X}^\top\delta{\bf X}$ and $\delta{\bf H}^\top \delta{\bf H}$ so that their right singular vectors are the same \citep{pehlevan2015normative}. Then, the objective under the trace diagonalizes and its value is the sum of eigenvalue pair products. Since the eigenvalues of $\delta{\bf H}^\top \delta{\bf H}$ are upper bounded by $t$, the objective \eqref{NIPS3A} is maximized by setting the eigenvalues of $\delta{\bf H}^\top \delta{\bf H}$ that pair with the top $d$ eigenvalues of $\delta {\bf X}^\top\delta{\bf X}$ to $t$, and the rest to zero. Hence, the optimal $\delta {\bf H}$ satisfies the generalized prewhitening condition \eqref{hhbar}\citep{pehlevan2015normative}. More formally,
\begin{Th}[Modified from \citep{pehlevan2015normative}]Suppose an eigen-decomposition of $\delta{\bf X}^\top\delta{\bf X}$ is $\delta{\bf X}^\top\delta{\bf X} = {\bf V}^X {\bf \Lambda}^X {{\bf V}^X}^\top$, where eigenvalues are sorted in decreasing order.   Then, all optimal $\delta{\bf H}$ of \eqref{NIPS3A} have an SVD decomposition of the form
	\begin{align}\label{wYZ}
	\delta{\bf H}^* ={\bf U}^H \,\sqrt{t}\, {{\bf \Lambda}^H}' \, {{\bf V}^X}^\top,
	\end{align}
	where  ${{\bf \Lambda}^H}'$ is $d\times t$ with $d$ ones on top of the diagonal and zeros on the rest of the diagonal.
\end{Th}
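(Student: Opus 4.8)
The plan is to reformulate \eqref{NIPS3A} in terms of the Gram matrix ${\bf B}:=\delta{\bf H}^\top\delta{\bf H}$. As $\delta{\bf H}$ ranges over all $d\times t$ matrices, ${\bf B}$ ranges over all $t\times t$ symmetric positive semidefinite matrices of rank at most $d$, and the constraint $\delta{\bf H}^\top\delta{\bf H}\preceq t{\bf I}_t$ restricts this to those ${\bf B}$ whose eigenvalues all lie in $[0,t]$; on this feasible set the objective ${\rm Tr}\!\left(\delta{\bf X}^\top\delta{\bf X}\,{\bf B}\right)$ is \emph{linear} in ${\bf B}$. I would then (i) bound this functional from above, (ii) characterize exactly the ${\bf B}$ attaining the bound, and (iii) translate the optimal ${\bf B}$ back into an SVD statement about $\delta{\bf H}^*$.

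\textbf{Step (i).} Write $\delta{\bf X}^\top\delta{\bf X}=\sum_i\lambda_i^X{\bf v}_i{\bf v}_i^\top$ with ${\bf v}_i$ the columns of ${\bf V}^X$ and $\lambda_1^X\ge\lambda_2^X\ge\cdots\ge 0$. Because $\delta{\bf X}={\bf A}\,\delta{\bf S}$ with ${\bf A}$ full rank and $\tfrac1t\delta{\bf S}\,\delta{\bf S}^\top={\bf I}_d$ (which forces ${\rm rank}(\delta{\bf S})=d$), the matrix $\delta{\bf X}^\top\delta{\bf X}$ has rank exactly $d$, so $\lambda_1^X,\dots,\lambda_d^X>0$ and $\lambda_i^X=0$ for $i>d$. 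Then
\[
{\rm Tr}\!\left(\delta{\bf X}^\top\delta{\bf X}\,{\bf B}\right)=\sum_{i=1}^d\lambda_i^X\,{\bf v}_i^\top{\bf B}\,{\bf v}_i\ \le\ t\sum_{i=1}^d\lambda_i^X ,
\]
since ${\bf v}_i^\top{\bf B}\,{\bf v}_i\le t\|{\bf v}_i\|^2=t$ by ${\bf B}\preceq t{\bf I}_t$. A matrix of the form \eqref{wYZ} has $\delta{\bf H}^{*\top}\delta{\bf H}^*=t\sum_{i=1}^d{\bf v}_i{\bf v}_i^\top\preceq t{\bf I}_t$ and attains $t\sum_{i=1}^d\lambda_i^X$, so this is the optimal value.

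\textbf{Steps (ii)--(iii).} If $\delta{\bf H}^*$ is optimal, then $\sum_{i=1}^d\lambda_i^X\big(t-{\bf v}_i^\top{\bf B}\,{\bf v}_i\big)=0$; every summand is nonnegative, hence each is zero, and since $\lambda_i^X>0$ we get ${\bf v}_i^\top{\bf B}\,{\bf v}_i=t$ for $i=1,\dots,d$. As $t{\bf I}_t-{\bf B}\succeq 0$ and $\|{\bf v}_i\|=1$, the identity ${\bf v}_i^\top(t{\bf I}_t-{\bf B}){\bf v}_i=0$ forces $(t{\bf I}_t-{\bf B}){\bf v}_i=0$, i.e.\ ${\bf B}{\bf v}_i=t{\bf v}_i$. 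So ${\rm span}\{{\bf v}_1,\dots,{\bf v}_d\}$ lies inside the eigenvalue-$t$ eigenspace of ${\bf B}$; since ${\rm rank}({\bf B})\le d$, that eigenspace is exactly $d$-dimensional and all remaining eigenvalues of ${\bf B}$ vanish, so ${\bf B}=\delta{\bf H}^{*\top}\delta{\bf H}^*=t\sum_{i=1}^d{\bf v}_i{\bf v}_i^\top$. Finally, any $d\times t$ matrix with this Gram matrix has all $d$ singular values equal to $\sqrt{t}$ and admits the columns of ${\bf V}^X$ as its right singular vectors (the first $d$ pair with the singular value $\sqrt{t}$, the rest span the kernel), with an arbitrary $d\times d$ orthogonal left factor; naming this factor ${\bf U}^H$ gives exactly \eqref{wYZ}. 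Conversely any matrix of the form \eqref{wYZ} is feasible and optimal by Step (i), so \eqref{wYZ} describes all optimizers.

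\textbf{Expected main obstacle.} The delicate point is the equality analysis in (ii): extracting, from the trace identity, that the \emph{entire} top-$d$ eigenspace of $\delta{\bf X}^\top\delta{\bf X}$ is forced into the eigenvalue-$t$ eigenspace of ${\bf B}$. This is also exactly where the rank-$d$ hypothesis (full-rank ${\bf A}$ plus decorrelated sources) is needed: if ${\rm rank}(\delta{\bf X}^\top\delta{\bf X})=r<d$, the optimal ${\bf B}$ would be pinned down only on an $r$-dimensional subspace and \eqref{wYZ} would fail. Eigenvalue degeneracies within the top-$d$ block of ${\bf \Lambda}^X$ leave ${\bf V}^X$ non-unique, but this is harmless, since only the subspace ${\rm span}\{{\bf v}_1,\dots,{\bf v}_d\}={\rm range}(\delta{\bf X}^\top\delta{\bf X})$ enters and one is free to align its orthonormal basis with the right singular vectors of $\delta{\bf H}^*$. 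Steps (i) and (iii) are routine linear algebra.
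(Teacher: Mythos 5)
Your proof is correct and complete. It does, however, take a different route from the one the paper (following \citep{pehlevan2015normative}) sketches. The paper's argument is a two-stage eigen-alignment argument: first one shows (via a von Neumann--type trace inequality) that at the optimum the right singular vectors of $\delta{\bf H}$ align with the eigenvectors of $\delta{\bf X}^\top\delta{\bf X}$, which diagonalizes the objective into a sum of eigenvalue pair products; then one optimizes over the eigenvalues of $\delta{\bf H}^\top\delta{\bf H}$ subject to the bound $t$ and the rank-$d$ cap, placing $t$'s against the top $d$ eigenvalues of $\delta{\bf X}^\top\delta{\bf X}$. You instead treat the problem as a linear program in the Gram matrix ${\bf B}=\delta{\bf H}^\top\delta{\bf H}$ over the spectrahedron $\{0\preceq{\bf B}\preceq t{\bf I}_t,\ \mathrm{rank}\,{\bf B}\le d\}$, bound each term $\lambda_i^X\,{\bf v}_i^\top{\bf B}{\bf v}_i$ directly, and recover the full optimizer set from the equality conditions (using that $x^\top M x=0$ with $M\succeq 0$ forces $Mx=0$, plus the rank cap). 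This is more elementary and self-contained: it avoids invoking the alignment lemma and its somewhat delicate equality conditions, it makes explicit exactly where the hypothesis $\mathrm{rank}(\delta{\bf X}^\top\delta{\bf X})=d$ (full-rank ${\bf A}$, decorrelated unit-variance sources) is used, and it handles degeneracies in the top-$d$ spectrum transparently because only the range of $\delta{\bf X}^\top\delta{\bf X}$ enters. What the paper's approach buys in exchange is generality: the alignment argument extends directly to the case where one keeps more output channels than there are nonzero input eigenvalues and to the $\alpha$-regularized objective of Appendix B, where the objective is no longer supported on only $d$ eigendirections and your ``each term is separately saturated'' step would need to be replaced by the full eigenvalue-pairing argument.
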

The theorem implies that, first, $\frac{1}{t}\delta{\bf H}^* \, {\delta{\bf H}^*}^\top = {\bf I}_d$, and hence $\delta{\bf H}$ satisfies the generalized prewhitening condition \eqref{hhbar}. Second, ${\bf F}$, the linear mapping between $\delta{\bf H}^*$ and ${\delta\bf X}$, can be constructed using an SVD decomposition of $\delta{\bf X}$ and \eqref{wYZ}. 

The constraint in \eqref{NIPS3A} can be introduced into the objective function using as a Lagrange multiplier the Grassmanian of matrix $\delta{\bf G}\in {\mathbb R}^{m\times t}$ with ($m\geq d$): 
\begin{align}\label{NIPS3}
\min_{\delta{\bf H}} \max_{\delta{\bf G}} {\rm Tr}\left(-\delta {\bf X}^\top\delta{\bf X}\,\delta{\bf H}^\top \delta{\bf H} + \delta{\bf G}^\top \delta{\bf G}\left( \delta{\bf H}^\top \delta{\bf H} - t{\bf I}_t\right) \, \right),
\end{align}
This optimization problem \citep{pehlevan2015normative} will be used to derive an online neural algorithm.
%
	%
	%


Whereas the optimization problem \eqref{NIPS3} is formulated in the offline setting, i.e. outputs are computed only after receiving all inputs, to derive a biologically plausible algorithm, we need to formulate the optimization problem in the online setting, i.e. the algorithm receives inputs sequentially, one at a time, and computes an output before the next input arrives.  Therefore, we optimize \eqref{NIPS3} only for the data already received and only with respect to the current output: 
\begin{align}\label{whitenOnline}
\{\delta{\bf h}_t,\delta{\bf g}_t\} \longleftarrow 
\mathop {\arg \min }\limits_{\delta{\bf h}_t}  \mathop {\arg \max }\limits_{{\bf g}_t}{\rm Tr}\left(-\delta {\bf X}^\top\delta{\bf X}\,\delta{\bf H}^\top \delta{\bf H} + \delta{\bf H}^\top \delta{\bf H}\,\delta{\bf G}^\top \delta{\bf G} - t \, \delta{\bf G}^\top \delta{\bf G}\right).
 \end{align}
By keeping only those terms that depend on $\delta{\bf h}_t$ or $\delta{\bf g}_t$, we get the following objective:
\begin{align}\label{whitenOnline2}
\{\delta{\bf h}_t,\delta{\bf g}_t\} \longleftarrow \mathop {\arg \min }\limits_{\delta{\bf h}_t}  \mathop {\arg \max }\limits_{\delta{\bf g}_t} L(\delta{\bf h}_t,\delta{\bf g}_t),
 \end{align}
 where
\begin{align}\label{whitenOnlineReduced}
L =  - 2{\delta{\bf x}^\top_t}\left( {\sum\limits_{t' = 1}^{t - 1} {\delta{\bf x}_{t'}{\delta{\bf h}^\top_{t'}}} } \right)\delta{\bf h}_t -t\left\| \delta{{\bf g}_t} \right\|^2_2 &+ 2{\delta{\bf g}^\top_t}\left( {\sum\limits_{t' = 1}^{t - 1} {\delta{\bf g}_{t'}{\delta{\bf h}^\top_{t'}}} } \right)\delta{\bf h}_t \nonumber \\
&\qquad\qquad+\left(\left\| \delta{{\bf g}_t} \right\|^2_2 -\left\| \delta{{\bf x}_t} \right\|^2_2\right) \left\| \delta{{\bf h}_t} \right\|^2_2 .
 \end{align}
In the large-$t$ limit, the first three terms dominate over the last term, which we ignore. The remaining objective is strictly concave in $\delta{\bf g}_t$ and convex in $\delta{\bf h}_t$. We assume that the matrix $ \frac 1t {\sum\limits_{t' = 1}^{t - 1}  \delta{\bf h}_{t'}}{\delta{\bf g}^\top_{t'}} $ is full-rank. Then, the objective has a unique saddle point :
\begin{align}\label{onlinesaddle}
\delta{\bf g}^*_t &= {\bf W}^{GH}_t\delta{\bf h}^*_t, \nonumber \\
\delta{\bf h}^*_{t} &= \left({\bf W}^{HG}_t{\bf W}^{GH}_t\right)^{-1}{\bf W}^{HX}_t\delta{\bf x}_t,
\end{align}
where,
\begin{align}\label{weightdef}
{\bf W}^{HX}_t &:= \frac 1t {\sum\limits_{t' = 1}^{t - 1} {\delta{\bf h}_{t'}}  \delta{\bf x}_{t'}^\top}, \qquad {\bf W}^{HG}_t :=  \frac 1t {\sum\limits_{t' = 1}^{t - 1} {\delta{\bf h}_{t'}}  \delta{\bf g}_{t'}^\top}, \nonumber \\
{\bf W}^{GH}_t &:= {{\bf W}^{HG}_t}^\top :=\frac 1t {\sum\limits_{t' = 1}^{t' - 1} {\delta{\bf g}_{t'}}  \delta{\bf h}_{t'}^\top}.
 \end{align}
Hence, ${\bf F}_t:=\left({\bf W}^{HG}_t{\bf W}^{GH}_t\right)^{-1}{\bf W}^{HX}_t$ can be interpreted as the current estimate of the prewhitening matrix, ${\bf F}$.

We solve \eqref{whitenOnline2} with a gradient descent-ascent
\begin{align}\label{dynamicsdef}
\frac{d\delta{\bf h}_t}{d\gamma} &= -\frac{1}{2t}\nabla_{{\delta{\bf h}_t}}{L} ={\bf W}^{HX}_t\delta{\bf x}_t - {\bf W}^{HG}_t{\bf g}_t, \nonumber \\
\frac{d\delta{\bf g}_t}{d\gamma} &= \frac{1}{2t}\nabla_{{\delta{\bf g}_t}}{L} = - \delta{\bf g}_t + {\bf W}^{GH}_t\delta{\bf h}_t.
\end{align}
where $\gamma$ measures ``time'' within a single time step of $t$. Biologically, this is justified if the activity dynamics converges faster than the correlation time of the input data. The dynamics \eqref{dynamicsdef} can be proved to converge to the saddle point of the objective \eqref{whitenOnlineReduced}, see Appendix \ref{app_convergence}. 

Equation \eqref{dynamicsdef} describes the dynamics of a single-layer neural network with two-populations, Fig. \ref{Fig2}. ${\bf W}^{HX}_t$ represents the weights of feedforward  synaptic connections, ${\bf W}^{HG}_t$ and ${\bf W}^{GH}_t$ represent the weights of synaptic connections between the two populations. Remarkably, synaptic weights appear in the online algorithm despite their absence in the optimization problem formulations \eqref{NIPS3} and \eqref{whitenOnline}. Furthermore, $\delta\bf{h}_t$ neurons can be associated with principal neurons of a biological circuit and $\delta\bf{g}_t$ neurons with interneurons. 

Finally, using the definition of synaptic weight matrices \eqref{weightdef}, we can formulate recursive update rules:
\begin{align}\label{updatedef}
{\bf W}^{HX}_{t+1} &= {\bf W}^{HX}_{t} +    \frac{1}{t+1} \left(\delta{\bf h}_{t}  \delta{\bf x}_{t}^\top - {\bf W}^{HX}_{t} \right), \nonumber \\ 
{\bf W}^{HG}_{t+1} &= {\bf W}^{HG}_{t} +    \frac{1}{t+1} \left(\delta{\bf h}_{t}  \delta{\bf g}_{t}^\top - {\bf W}^{HG}_{t} \right), \nonumber \\
{\bf W}^{GH}_{t+1} &= {\bf W}^{GH}_{t} +    \frac{1}{t+1} \left(\delta{\bf g}_{t}  \delta{\bf h}_{t}^\top - {\bf W}^{GH}_{t} \right).
\end{align}

Equations \eqref{dynamicsdef} and \eqref{updatedef} define a neural algorithm that proceeds in two phases. After each
stimulus presentation, first, \eqref{dynamicsdef} is iterated until convergence by the dynamics of neuronal activities. Second, synaptic weights are updated according
to local, anti-Hebbian (for synapses from interneurons) and Hebbian (for all other synapses) rules \eqref{updatedef}.  Biologically, synaptic weights are updated on a slower timescale than neuronal activity dynamics.

Our algorithm can be viewed as a special case of the algorithm proposed in \citep{plumbley1996information,plumbley1994subspace}. Plumbley analyzed the convergence of synaptic weights \citep{plumbley1994subspace} in a stochastic setting by a linear stability analysis of the stationary point of synaptic weight updates. His results are directly applicable to our algorithm, and show that, if the synaptic weights of our algorithm converge to a stationary state, they whiten the input. 

Importantly, unlike \citep{plumbley1996information,plumbley1994subspace} which proposed the algorithm heuristically, we derived it by posing and solving an optimization problem.

\begin{figure}
\centering
\includegraphics{./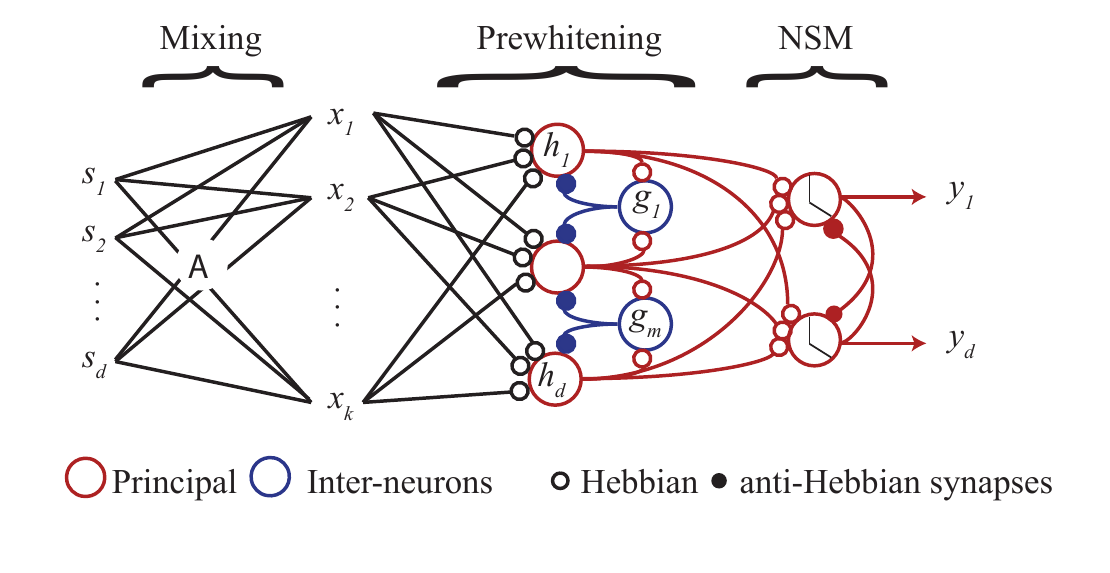}
\caption{Biologically plausible network for blind source separation. The prewhitening stage is composed of two populations of linear neurons. The NSM stage has a single population of rectifying neurons. \label{Fig2}}
\end{figure}

\subsubsection{Computing $\bar{\bf H}$} 

The optimization problem \eqref{NIPS3} and the corresponding neural algorithm, Eqs. \eqref{dynamicsdef} and \eqref{updatedef} almost achieve what is needed for noncentered prewhitening, but we still need to find $\bar {\bf H}$, since for the NSM step we need ${\bf H} = \delta{\bf H} + \bar{\bf H}$. We now discuss how $\bar{\bf H}$ can be learned along with $\delta{\bf H}$ using the same online algorithm. 

Our online algorithm for centered-whitening is of the following form. First, a neural dynamics stage outputs a linear transformation of the input:
\begin{align}\label{upd1}
\delta {\bf h}_t= {\bf F}_{t}\delta {\bf x}_t, 
\end{align}
and, second, synaptic weights and, hence,  ${\bf F}_t$ are updated: 
\begin{align}\label{upd3}
{\bf F}_{t+1} = {\bf F}_{t} +\delta {\bf F}_t (t,\delta {\bf h}_t, \delta {\bf g}_t, {\bf F}_{t}).
\end{align}
We can supplement this algorithm with a running estimate of $\bar{\bf h}$. Let the running estimate of average stimulus activity be
\begin{align}\label{upd6}
\bar {\bf x}_t=\frac 1t\sum_{t'=1}^t {\bf x}_{t'} = \left(1-\frac 1t\right) \bar{\bf x}_{t-1} + \frac 1t {\bf x}_t. 
\end{align}
Then,
\begin{align}\label{upd2}
\bar{\bf h}_{t} &= {\bf F}_{t} \bar{\bf x}_t. 
\end{align}

Alternatively, \eqref{upd1} and \eqref{upd2} can be combined into a single step: 
\begin{align}\label{upd4}
{\bf h}_t = {\bf F}_{t} {\bf x}_t,
\end{align}
where the network receives uncentered stimuli and prewhitenes it. Note that assignment \eqref{upd4} can still be done by iterating \eqref{dynamicsdef}, except now the input is ${\bf x}_t$ rather than $\delta{\bf x}_t$.
However, synaptic weights are still updated using $\delta{\bf x}_t = {\bf x}_t - \bar{\bf x}_t$, $\delta{\bf h}_t = {\bf h}_t - \bar{\bf h}_t$ and $\delta{\bf g}_t = {\bf g}_t - \bar{\bf g}_t$. Therefore we keep recursive estimates of the means. Substituting \eqref{upd6} into \eqref{upd4} and using \eqref{upd3}
\begin{align}\label{upd7}
\bar{\bf h}_{t} = \left(1-\frac 1t\right)\bar{\bf h}_{t-1} + \left(1-\frac 1t\right) \delta{\bf F}_{t-1}\bar{\bf x}_{t-1}  + \frac 1t {\bf h}_t. 
\end{align}
The term $ \left(1-\frac 1t\right) \delta{\bf F}_{t-1}\bar{\bf x}_{t-1}$ requires non-local calculations. Assuming that in the large-$t$ limit updates to ${\bf F}$ are small, we can ignore this term and obtain a recursion:
\begin{align}\label{upd8}
\bar{\bf h}_{t} =  \left(1-\frac 1t\right)\bar{\bf h}_{t-1} + \frac 1t {\bf h}_t. 
\end{align}
Finally, a similar argument can be given for $\bar{\bf g}_{t}$. We keep a recursive estimate of $\bar{\bf g}_{t}$:
\begin{align}\label{upd9}
\bar{\bf g}_{t} = \left(1-\frac 1t\right) \bar{\bf g}_{t-1} + \frac 1t {\bf g}_t.
\end{align}
%

To summarize, when a new stimulus ${\bf x}_t$ is observed, the algorithm operates in two steps. In the first step, the following two-population neural dynamics runs until convergence to a fixed point:
\begin{align}\label{fullFirst}
\frac{d{\bf h}_t}{d\gamma} &= {\bf W}^{HX}_t{\bf x}_t - {\bf W}^{HG}_t{\bf g}_t, \nonumber \\
\frac{d{\bf g}_t}{d\gamma} &=  - {\bf g}_t + {\bf W}^{GH}_t{\bf h}_t,
\end{align}
The convergence proof  for neural dynamics \eqref{dynamicsdef} given in Appendix \ref{app_convergence} also applies here. Besides the synaptic weight, each neuron remembers its own average activity and each synapse remembers average incoming activity. In the second step of the algorithm, the average activities are updated by:
\begin{align}\label{fullFirstbar}
\bar{\bf x}_{t} &= \left(1-\frac 1t\right) \bar{\bf x}_{t-1} + \frac 1t {\bf x}_t,\nonumber \\
\bar{\bf h}_{t} &=  \left(1-\frac 1t\right)\bar{\bf h}_{t-1} + \frac 1t {\bf h}_t, \nonumber \\
\bar{\bf g}_{t} &= \left(1-\frac 1t\right) \bar{\bf g}_{t-1} + \frac 1t {\bf g}_t.
\end{align}
Synaptic weight matrices are updated recursively by 
\begin{align}\label{fullFirstW}
{\bf W}^{HX}_{t+1} &= {\bf W}^{HX}_{t} +    \frac{1}{t+1} \left(\delta{\bf h}_{t} \delta{\bf x}_{t}^\top - {\bf W}^{HX}_{t} \right), \nonumber \\ 
{\bf W}^{HG}_{t+1} &= {\bf W}^{HG}_{t} +    \frac{1}{t+1} \left(\delta{\bf h}_{t}  \delta{\bf g}_{t}^\top - {\bf W}^{HG}_{t} \right), \nonumber \\
{\bf W}^{GH}_{t+1} &= {\bf W}^{GH}_{t} +    \frac{1}{t+1} \left(\delta{\bf g}_{t}  \delta{\bf h}_{t}^\top - {\bf W}^{GH}_{t} \right).
\end{align}
Once the synaptic updates are done, the new stimulus, ${\bf x}_{t+1}$, is processed. We note again that all the synaptic update rules are local, and hence are biologically plausible.

\subsection{Online NSM}

Next, we derive the second-layer network which solves the NSM optimization problem \eqref{NSM} in an online setting \citep{pehlevan2014NMF}.

The online optimization problem is:
\begin{align}\label{NSMonline}
{\bf y}_t \longleftarrow \argmin_{{\bf y}_t,\, {\bf y}_t\geq0} \left\| {\bf H}^\top{\bf H}-{\bf Y}^\top{\bf Y}\right\|_F^2.
\end{align}
Proceeding as before, let's rewrite \eqref{NSMonline} keeping only terms that depend on ${\bf y}_t$:
\begin{align}\label{NSMonline2}
{\bf y}_t \longleftarrow \argmin_{{\bf y}_t,\, {\bf y}_t\geq0}\left( 2 {\bf y}_t^\top \left(\sum_{t'=1}^{t-1}{\bf y}_{t'}{\bf y}_{t'}^\top\right){\bf y}_t - 4   {\bf h}_t^\top \left(\sum_{t'=1}^{t-1}{\bf h}_{t'}{\bf y}_{t'}^\top\right){\bf y}_t+ \left\| {\bf y}_t \right\|^4_2 - 2 \left\| {\bf h}_t \right\|^2_2 \left\| {\bf y}_t \right\|^2_2\right).
\end{align}
In the large-$t$ limit, the last two terms can be ignored and the remainder is a quadratic form in ${\bf y}_t$. We minimize it using coordinate descent  \citep{wright2015coordinate} which is both fast and neurally plausible. In this approach, neurons are updated one-by-one by performing an exact minimization of \eqref{NSMonline2} with respect to $y_{t,i}$ until convergence: 
\begin{align}\label{NSMdynamics}
y_{t,i}\longleftarrow \max\left(\sum_{j}{W}^{YH}_{t,ij}{h}_{t,j}-{W}_{t,ij}^{YY}{y}_{t,j},0\right),
\end{align}
where
\begin{align}\label{WNSMdef}
{W}_{t,ij}^{YH} = \frac{\sum_{t'=1}^{t-1}y_{t',i}h_{t',j}}{\sum_{t'=1}^{t-1}y_{t',i}^2}, \qquad  {W}_{t,i,j\neq i}^{YY} = \frac{\sum_{t'=1}^{t-1}y_{t',i}y_{t',j}}{\sum_{t'=1}^{t-1}y_{t',i}^2}, \qquad {W}_{t,ii}^{YY}=0.
\end{align}

For the next time step ($t+1$), we can update the synaptic weights recursively, giving us the following local learning rules:
\begin{align}\label{fullNSMW}
D_{t+1,i} &= D_{t,i} + y_{t,i}^2, \nonumber \\
{W}_{t+1,ij}^{YH} &= {W}_{t,ij}^{YH} + \frac{1}{D_{t+1,i}}\left(y_{t,i}h_{t,j}-y_{t,i}^2W^{YH}_{t,ij}\right), \nonumber \\
{W}_{t+1,i,j\neq i}^{YY} &= {W}_{t,ij}^{YY} + \frac{1}{D_{t+1,i}}\left(y_{t,i}y_{t,j}-y_{t,i}^2W^{YY}_{t,ij}\right), \qquad  {W}_{t,ii}^{YY}=0.
\end{align}
Interestingly, these update rules are local and are identical to the single-neuron Oja rule \citep{oja1982simplified}, except that the learning rate is given explicitly in terms of cumulative activity $1/D_{t,i}$ and the lateral connections are anti-Hebbian.

After the arrival of each data vector, the operation of the complete two-layer network algorithm, Fig. \ref{Fig2}, is as follows. First, the dynamics of the prewhitening network runs until convergence. Then the output of the prewhitening network is fed to the NSM network, and the NSM network dynamics runs until convergence to a fixed point. Synaptic weights are updated in both networks for processing the next data vector.

\subsubsection{NICA is a stationary state of online NSM}\label{convergence2}

Here we show that the solution to the NICA problem is a stationary synaptic weights state of the online NSM algorithm. In the stationary state the expected updates to synaptic weights are zero, i.e.
\begin{align}\label{stationaryDef}
\left\langle \Delta W^{YH}_{ij}\right\rangle = 0, \qquad \left\langle \Delta W^{YY}_{ij}\right\rangle = 0, 
\end{align}
where we dropped the $t$ index, and brackets denote averages over the source distribution. 

Suppose the stimuli obey the NICA generative model, Eq. \eqref{genmod}, and the observed mixture, ${\bf x}_t$, is whitened with the exact (generalized) prewhitening matrix ${\bf F}$ described in Theorem \ref{th2}.  Then, input to the network at time, $t$, is  ${\bf h}_t = {\bf F}{\bf x}_t = {\bf F}{\bf A} {\bf s}_t$. Our claim is that there exists synaptic weight configurations for which 1) for any mixed input, ${\bf x}_t$, the output of the network is the source vector, i.e. ${\bf y}_t = {\bf P}{\bf s}_t$, where ${\bf P}$ is a permutation matrix, and 2) this synaptic configuration is a stationary state. 

We prove our claim by constructing these synaptic weights. For each permutation matrix, we first relabel the outputs such that $i^{\rm th}$ output recovers the $i^{\rm th}$ source and hence ${\bf P}$ becomes the identity matrix. Then, the weights are:
\begin{align}\label{stationaryWeights}
W^{YY}_{ij} = \frac{\left\langle s_i\right\rangle\left\langle s_j\right\rangle}{\left\langle s_i^2\right\rangle}, \qquad W^{YH}_{ij} = \frac{\left\langle s_ih_j\right\rangle}{\left\langle s_i^2\right\rangle}=\sum_{k\neq i}({\bf FA})_{jk}\frac{\left\langle s_k \right\rangle \left\langle s_i\right\rangle}{\left\langle s_i^2\right\rangle} + ({\bf FA})_{ji}.
\end{align}
Given mixture ${\bf x}_t$, NSM neural dynamics with these weights converge to $y_{t,i} = s_{t,i}$, which is the the unique fixed point\footnote{\textbf{Proof:} The net input to neuron $i$ at the claimed fixed point is $\sum_{j}W^{YH}_{ij} h_{t,j} - \sum_{j\neq i}W^{YY}_{ij} s_{t,j}$. Plugging in \eqref{stationaryWeights} and ${\bf h}_t =  {\bf F}{\bf A} {\bf s}_t$, and  using \eqref{orthFA} one gets that the net input is $s_{t,i}$, which is also the output since sources are nonnegative. This fixed point is unique and globally stable because the NSM neural dynamics is a coordinate descent on a strictly convex cost given by $\frac 12{\bf y}_t^\top \left\langle {\bf s} {\bf s}^\top\right\rangle{\bf y}_t - {\bf h}_t^\top\left\langle {\bf h} {\bf s}^\top\right\rangle{\bf y}_t$.}. Furthermore, these weights define a stationary state as defined in \eqref{stationaryDef} assuming a fixed learning rate. To see this substitute weights from \eqref{stationaryWeights} into the last two equations of \eqref{fullNSMW} and average over the source distribution. The fixed learning rate assumption is valid in the large-$t$ limit when changes to $D_{t,i}$ become small (${\mathcal{O}(1/t)}$, see \citep{pehlevan2015MDS}).

%
%

\section{Numerical simulations}\label{numerical}

Here we present numerical simulations of our two-layered neural network using various datasets and compare the results to that of other algorithms.

In all our simulations, $d=k=l=m$, except in Fig. \ref{Fig5}B where $d=k>l=m$. Our networks were initialized as follows:
\begin{enumerate}
\item In the prewhitening network, ${\bf W}^{HX}$ and ${\bf W}^{HG}$ were chosen to be random orthonormal matrices. ${\bf W}^{GH}$ is initialized as ${{\bf W}^{HG}}^\top$ because of its definition in Eq. \eqref{weightdef} and the fact that this choice guarantees the convergence of the neural dynamics \eqref{fullFirst} (see Appendix \ref{app_convergence}).

\item In the NSM network, ${\bf W}^{YH}$ was initialized to a random orthonormal matrix and ${\bf W}^{YY}$ was set to zero.
\end{enumerate}
The learning rates were chosen as follows:
\begin{enumerate}
\item For the prewhitening network, we generalized the time-dependent learning rate  \eqref{fullFirstW} to,
\begin{align}\label{timedep}
\frac{1}{a+bt},
\end{align}
and performed a grid search over $a\in\lbrace 10, 10^2, 10^3, 10^4\rbrace$ and $b\in \lbrace 10^{-2}, 10^{-1}, 1\rbrace$ to find the combination with best performance. Our performance measures will be introduced below.

\item For the NSM network, we generalized the activity-dependent learning rate \eqref{fullNSMW} to,
\begin{align}\label{activitydep}
\frac{1}{\tilde D_{t+1,i}}, \qquad {\rm where} \qquad \tilde D_{t+1,i} = \min \left(\tilde a, \tilde b\tilde D_{t+1,i} +  y_{t,i}^2 \right), 
\end{align}
and performed a grid search over several values of $\tilde a\in\lbrace 10, 10^2, 10^3, 10^4\rbrace$ and $\tilde b\in \lbrace 0.8, 0.9, 0.95, 0.99, 0.995, 0.999, 0.9999, 1\rbrace$ to find the combination with best performance. The $\tilde b$ parameter introduces ``forgetting'' to the system \citep{pehlevan2015MDS}. We hypothesized that forgetting will be beneficial in the two-layer setting because the prewhitening layer output changes over time and the NSM layer has to adapt. Further, for comparison purposes, we also implemented this algorithm with a time-dependent learning rate of the form \eqref{timedep} and performed a grid search with $a\in\lbrace 10^2, 10^3, 10^4\rbrace$ and $b\in \lbrace 10^{-2}, 10^{-1}, 1\rbrace$ to find the combination with best performance. 
\end{enumerate}
For the NSM network, to speed up our simulations we implemented a procedure from \citep{plumbley2004nonnegative}. At each iteration  we checked whether there is any output neuron who has not fired up until that iteration. If so, we flipped the sign of its feedforward inputs. In practice, the flipping only occured within the first $\sim$10 iterations. 

For comparison, we implemented five other algorithms. First is the offline algorithm  \eqref{pgd}, the other two are chosen to represent major online algorithm classes:
\begin{enumerate}
\item{\bf Offline projected gradient descent:} We simulated the projected gradient descent algorithm \eqref{pgd}. We used variable stepsizes of the form \eqref{timedep} and performed a grid search with $a\in\lbrace 10^4, 10^5, 10^6\rbrace$ and $b\in \lbrace 10^{-3}, 10^{-2}, 10^{-1}\rbrace$ to find the combination with best performance. We initialized elements of the  matrix, ${\bf Y}$, by drawing a Gaussian random variable with zero mean and unit variance and rectifying it. Input dataset was whitened offline before passing to projected gradient descent.
\item{\bf fastICA}: fastICA \citep{hyvarinen1997fast, hyvarinen1999fast, hyvarinen2000independent} is a popular ICA algorithm which does not assume nonnegativity of sources. We implemented an online version of fastICA  \citep{hyvarinen1998independent} using the same parameters except for feedforward weights. We used the time-dependent learning rate \eqref{timedep} and performed a grid search with $a\in\lbrace 10, 10^2, 10^3, 10^4\rbrace$ and $b\in \lbrace 10^{-2}, 10^{-1}, 1\rbrace$ to find the combination with best performance. fastICA requires whitened and centered input \citep{hyvarinen1998independent} and computes a decoding matrix that maps mixtures back to sources. We ran the algorithm with whitened and centered input. To recover nonnegative sources, we applied the decoding matrix to noncentered but whitened input.

\item{\bf Infomax ICA:} \cite{bell1995information} proposed a blind source separation algorithm that  maximizes the mutual information between inputs and outputs, namely the Infomax principle \citep{linsker1988self}. We simulated an online version due to \cite{cichocki1996new}. We chose cubic neural nonlinearities compatible with sub-Gaussian input sources. This differs from our fastICA implementation where the nonlinearity is also learned online. Infomax ICA computes a decoding matrix using centered, but not whitened, data. To recover nonnegative sources, we applied the decoding matrix to noncentered inputs. Finally, we rescaled the sources so that their variance is 1. We experimented with several learning rate parameters for finding optimal performance.
\item{\bf Linsker's network:} \cite{linsker1997local} proposed a neural network with local learning rules for Infomax ICA. We simulated this algorithm with cubic neural nonlinearities and preprocessing and decoding done as in our Infomax ICA implementation. 
\item {\bf Nonnegative PCA}: Nonnegative PCA algorithm \citep{plumbley2004nonnegative}  solves the NICA task and makes explicit use of the nonnegativity of sources. We use the online version given in \citep{plumbley2004nonnegative}. To speed up our simulations we implemented a procedure from \citep{plumbley2004nonnegative}. At each iteration  we checked whether there is any output neuron who has not fired up until that iteration. If so, we flipped the sign of its feedforward inputs. For this algorithm, we again used the time-dependent learning rate of \eqref{timedep} and performed a grid search with $a\in\lbrace 10, 10^2, 10^3, 10^4\rbrace$ and $b\in \lbrace 10^{-2}, 10^{-1}, 1\rbrace$ to find the combination with best performance. Nonnegative PCA assumes whitened, but not centered input \citep{plumbley2004nonnegative}.
\end{enumerate}

Next, we present the results of our simulations on three datasets.

\subsection{Mixture of random uniform sources}

\begin{figure}

\includegraphics{./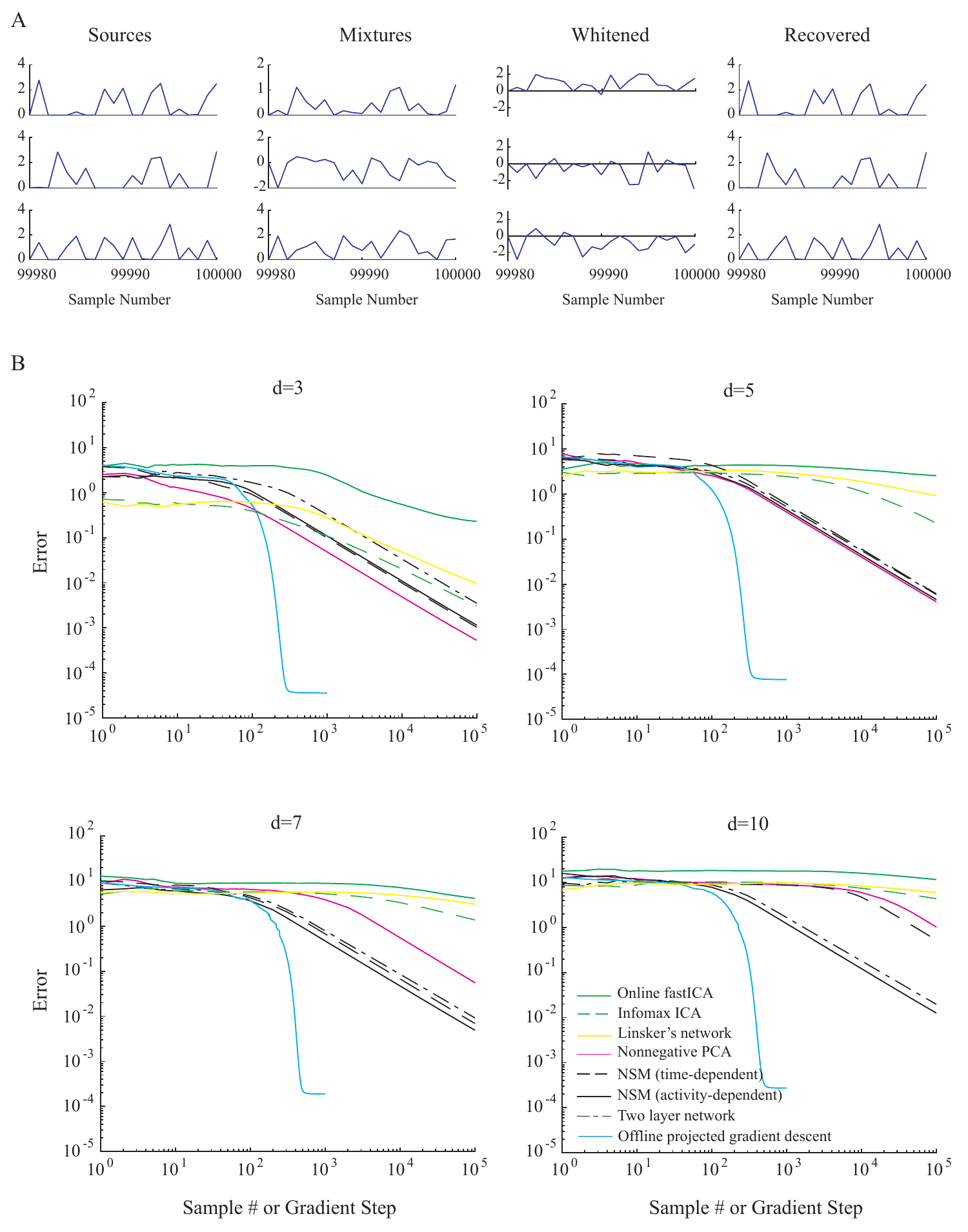}
\caption{Performance of our algorithms when presented with a mixture of random uniform sources: A) Sample source, mixture, whitened and recovered signals for a $d=3$ task, performed by our two-layer algorithm. The whitened signal is the output of the first layer, while the recovered signal is the output of second layer. B) Performance comparison of the online algorithms presented in this paper with projected gradient descent, Online fastICA, Infomax ICA, Linsker's network and Nonnegative PCA. Curves show averages over 10 simulations. Error bars not shown for visual clarity. Learning rate parameters are given in Appendix \ref{lrate}.\label{Fig3}}
\end{figure}

The source i.i.d. samples were set to zero with probability 0.5 and sampled uniformly from iterval $[0,\sqrt{48/5}]$ with probability 0.5. The dimensions of source vectors were $d=\lbrace3,5,7,10\rbrace$. The mixing matrices are given in Appendix \ref{mixing}. $10^5$ source vectors were generated for each run. 
For a sample of the original and mixed signals, see Fig \ref{Fig3}A.

The inputs to fastICA and Nonnegative PCA algorithms were prewhitened offline, and in the case of fastICA they were also centered. We ran our NSM network both as a single layer algorithm with prewhitening done offline, and as a part of our two-layer algorithm with whitening done online. 

To quantify the performance of tested algorithms, we used the mean-squared-error:
\begin{align}
E_{t} = \frac{1}{t}\sum_{t' = 1}^t\left \lVert {\bf s}_{t'} - {\bf P}{\bf y}_{t'} \right \rVert_2^2,
\end{align}
where ${\bf P}$ is a permutation matrix that is chosen to minimize the mean-squared-error at $t=10^5$. The learning rate parameters of all networks were optimized by a grid search using $E_{10^5}$ as the performance metric. 

In Fig. $\ref{Fig3}$B, we show the performances of all algorithms we implemented. Our algorithms perform as well or better than others, especially as dimensionality of the input increases. Offline whitening is better than online whitening, however, as dimensionality increases, online whitening becomes competitive with offline whitening. In fact, our two-layer and single-layer networks perform better than Online fastICA and Nonnegative PCA for which whitening was done offline. 

We also simulated a fully offline algorithm by taking projected gradient descent steps until the residual error plateaued (Fig. $\ref{Fig3}$B). The performance of the offline algorithm quantifies two important metrics. First, it establishes the loss in performance due to online (as opposed to offline) processing. Second, it establishes the lowest error that could be achieved by the NSM method for the given dataset. The lowest error is not necessarily zero due to the finite size of the dataset. This method is not perfect because the projected gradient descent may get stuck in a local minimum of Eq. \eqref{NSM}.

We also tested whether the learned synaptic weights of our network match our theoretical predictions. In Fig. \ref{Fig4}A, we show examples of learned feedforward and recurrent synaptic weights at $t=10^5$, and what is expected from our theory \eqref{stationaryWeights}. We observed an almost perfect match between the two. In Fig. \ref{Fig4}B, we quantify the convergence of simulated synaptic weights to the theoretical prediction by plotting a normalized error metric defined by $E_t = \left\Vert {\bf W}_{t,{\rm simulation}} -{\bf W}_{{\rm theory}}  \right\Vert_F^2/ \left\Vert {\bf W}_{{\rm theory}}  \right\Vert_F^2$.

\begin{figure}
\centering
\includegraphics{./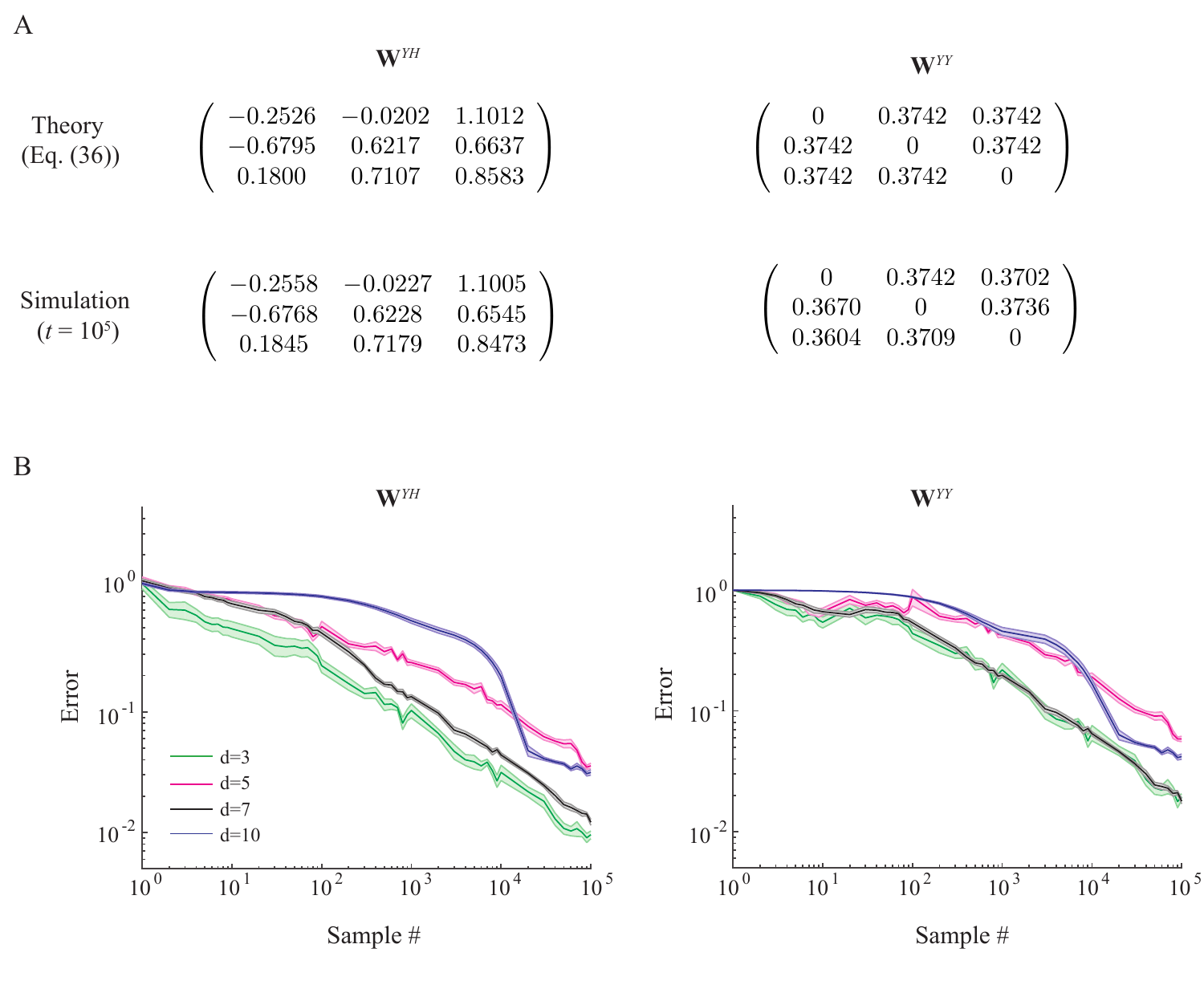}
\caption{Theoretical predictions of learned synaptic weights match simulations: A) Example synaptic weight matrices predicted from theory \eqref{stationaryWeights} compared to results from an example simulation ($t=10^5$). B) Convergence of simulated network synaptic weights to theoretical predictions. For this figure, inputs were whitened offline and the NSM network was run with time-dependent learning rates. Shaded bars show standard error over 10 simulations. \label{Fig4}}
\end{figure}

\subsection{Mixture of random uniform and exponential sources}

Our algorithm can demix sources sampled from different statistical distributions. To illustrate this point, we generated a dataset with two uniform and three exponential source channels. The uniform sources were sampled as before. The exponential sources were either zero (with probability 0.5) or sampled from an exponential distribution, scaled so that the variance of the channel is 1. In Fig. \ref{Fig5}A, we show that the algorithm succesfully recovers sources.

To test denoising capabilities of our algorithm, we created a dataset where source signals are accompanied by background noise. Sources to be recovered were three exponential channels, which were sampled as before. Background noises were two uniform channels which were sampled as before, except scaled to have variance 0.1. To denoise the resulting five dimensional mixture, the prewhitening layer reduced its five input dimensions to three. Then, the NSM layer succesfully recovered sources, Fig. \ref{Fig5}B. Hence, the prewhitening layer can act as a denoising stage.

\begin{figure}
	
	\includegraphics{./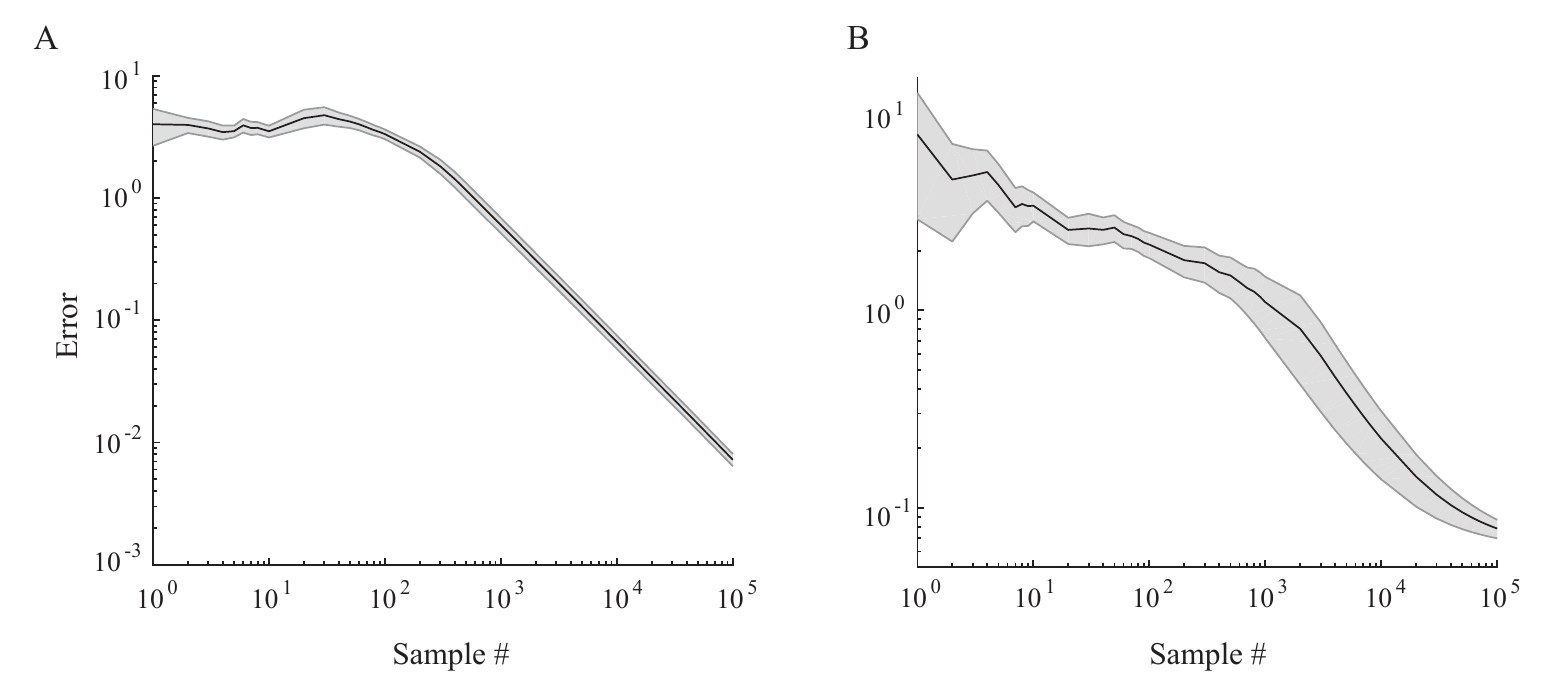}
	\caption{Performance of our two layer algorithm when presented with a mixture of random uniform and exponential sources. A) Recovery of a mixture of three exponential and two uniform sources.  B) Recovery of three exponential sources corrupted by two background noise channels. Learning rate parameters are given in Appendix \ref{lrate}.\label{Fig5}}
\end{figure}

\subsection{Mixture of natural scenes}

Next, we consider recovering images from their mixtures, Fig. \ref{Fig6}A, where each image is treated as one source. Four image patches of size $252\times 252$ pixels were chosen from a set of images of natural scenes which were previously used in \citep{hyvarinen2000emergence,plumbley2004nonnegative}. The preprocessing was as in \citep{plumbley2004nonnegative}: 1) Images were downsampled by a factor of 4 to obtain $63\times 63$ patches, 2) Pixel intensities were shifted to have a minimum of zero and 3) Pixel intensities were scaled to have unit variance. Hence, in this dataset, there are $d=4$ sources, corresponding image patches, and a total of $63\times63 = 3969$ samples. These samples were presented to the algorithm 5000 times with randomly permuted order in each presentation. The $4\times 4$ mixing matrix, which was generated randomly, is given in Appendix \ref{mixing}. 

In Fig. $\ref{Fig6}$B, we show the performances of all algorithms we implemented in this task. We see that our algorithms, when compared to fastICA and Nonnegative PCA, perform much better.

\begin{figure}
\centering
\includegraphics{./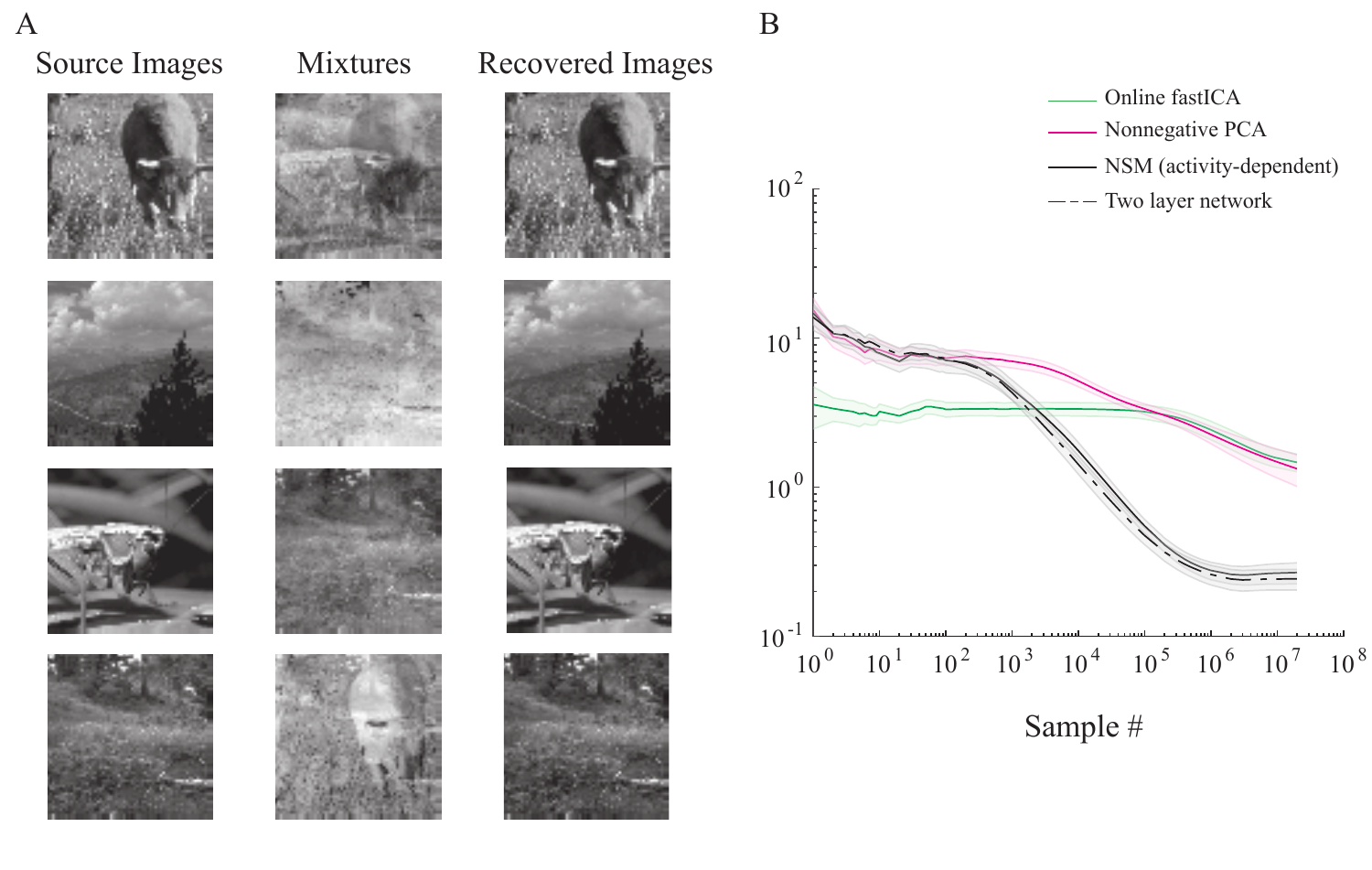}
\caption{Performance of our algorithm when presented with a mixture of natural images: A)  Sample source, mixture, and recovered images, performed by our two-layer algorithm. B) Performance comparison of our online algorithms with Online fastICA and Nonnegative PCA. Shaded bars show standard error over 10 simulations. Learning rate parameters are listed in Appendix \ref{lrate}. \label{Fig6}}
\end{figure}
%



\section{Discussion}

In this paper we presented a new neural algorithm for blind nonnegative source separation. We started by assuming the nonnegative ICA generative model \citep{plumbley2001adaptive,plumbley2002conditions} where inputs are linear mixtures of independent and nonnegative sources. We showed that the sources can be recovered from inputs by two sequential steps, 1) generalized whitening and 2) NSM. In fact, our argument requires sources to be only uncorrelated, and not necessarily independent. Each of the two steps can be performed online with single-layer neural networks with local learning rules \citep{pehlevan2014NMF,pehlevan2015normative}. Stacking these two networks yields a two-layer neural network for blind nonnegative source separation (Fig. \ref{Fig2}). Numerical simulations show that our neural network algorithm performs well.

Because our network is derived from optimization principles, its biologically realistic features can be given meaning. The network is multi-layered, because each layer performs a different optimization. Lateral connections create competition between principal neurons forcing them to differentiate their outputs. Interneurons clamp the activity dimensions of principal neurons \citep{pehlevan2015normative}. Rectifying neural nonlinearity is related to nonnegativity of sources. Synaptic plasticity \citep{malenka2004ltp}, implemented by local Hebbian and anti-Hebbian learning rules, achieves online learning. While Hebbian learning is famously observed in neural circuits \citep{bliss1973longa,bliss1973long}, our network also makes heavy use of anti-Hebbian learning, which can be interpreted as the long-term potentiation of inhibitory postsynaptic potentials. Experiments show that such long-term potentiation can arise from pairing  action potentials in inhibitory neurons with subthreshold
depolarization of postsynaptic pyramidal neurons \citep{komatsu1994age,maffei2006potentiation}. However, plasticity in inhibitory synapses does not have to be Hebbian, i.e. require correlation between pre- and postsynaptic activity  \citep{kullmann2012plasticity}. 

For improved biological realism, the network should respond to a continuous stimulus stream by continuous and simultaneous changes to its outputs and synaptic weights. Presumably, this requires neural time scales to be faster and synaptic time scales to be slower than that of changes in stimuli. To explore this possibility, we simulated some of our datasets with limited number of neural activity updates (not shown) and found that $\sim$10 updates per neuron is sufficient for successful recovery of sources without significant loss in performance. With a neural time scale of 10ms, this should take about 100ms, which is sufficiently fast given that, for example, the temporal autocorrelation time scale of natural image sequences is about 500ms \citep{david2004natural,bull2014communicating}.

It is interesting to compare the two-layered architecture we present to the multilayer neural networks of deep learning approaches \citep{lecun2015deep}. 1) For each data presentation, our network performs recurrent dynamics to produce an output, while the deep networks have feedforward architecture. 2) The first layer of our network has multiple neuron types, principal and interneurons, and only principal neurons project to the next layer. In deep learning, all neurons in a layer project to the next layer. 3) Our network operates with local learning rules, while deep learning uses backpropagation, which is not local. 4) We derived the architecture, the dynamics, and the learning rules of our network from a principled cost function. In deep learning, the architecture and the dynamics of a neural network are designed by hand, only the learning rule is derived from a cost function. 5) Finally, in building a neural algorithm, we started with a generative model of inputs, from which we inferred algorithmic steps to recover latent sources. These algorithmic steps guided us in deciding which single-layer networks to stack. In deep learning, no such generative model is assumed and network architecture design is more of an art. We believe starting from a generative model might lead to a more systematic way of network design. In fact, the question of generative model appropriate for deep networks is already being asked \citep{patel2016probabilistic}.


\subsubsection*{Acknowledgments}

We thank Andrea Giovannucci, Eftychios Pnevmatikakis, Anirvan Sengupta and Sebastian Seung for useful discussions. DC is grateful to the IARPA MICRONS program for support.

\pagebreak

\appendix

\section{Convergence of the gradient descent-ascent dynamics}\label{app_convergence} 

Here we prove that the neural dynamics \eqref{dynamicsdef} converges to the saddle point of the objective function \eqref{whitenOnlineReduced}. Here we assume that ${\bf W}^{HG}$ is full-rank and $l=d$. First, note that the optimum of \eqref{whitenOnlineReduced} is also the fixed point of \eqref{dynamicsdef}. Since the neural dynamics  \eqref{dynamicsdef}  is linear, the fixed point is globally convergent  if and only if the eigenvalues of the matrix
\begin{align}
\left[ \begin{array}{c c} {\bf 0} & -{\bf W}^{HG} \\ {\bf W}^{GH} & -{\bf I} \end{array} \right]
\end{align}
have negative real parts. 

The eigenvalue equation is
\begin{align}
\left[ \begin{array}{c c} {\bf 0} & -{\bf W}^{HG} \\ {\bf W}^{GH} & -{\bf I} \end{array} \right] \left[\begin{array}{c} {\bf x}_1 \\ {\bf x}_2 \end{array} \right] = \lambda \left[\begin{array}{c} {\bf x}_1 \\ {\bf x}_2 \end{array} \right],
\end{align}
which implies
\begin{align}\label{eigeq}
-{\bf W}^{HG}{\bf x}_2 = \lambda {\bf x}_1, \qquad {\bf W}^{GH} {\bf x}_1 = \left(\lambda + 1\right){\bf x}_2 .
\end{align}
Using these relations, we can solve for all the $d+m$ eigenvalues. There are two cases:
\begin{enumerate}
\item $\lambda = -1$. This implies that ${\bf W}^{GH} {\bf x}_1 = 0$ and ${\bf W}^{HG} {\bf x}_2 = {\bf x}_1$.  ${\bf x}_1$ is in the null-space of ${\bf W}^{GH}$. Since ${\bf W}^{GH}$ is $m\times d$ with $m\geq d$, the null-space is $m-d$ dimensional, and one has $m-d$ degenerate $\lambda = -1$ eigenvalues.
\item $\lambda \neq -1$. Substituting for ${\bf x}_2$ in the first equation of \eqref{eigeq}, this implies that ${\bf W}^{HG}{\bf W}^{GH} {\bf x}_1 = -\lambda\left(\lambda+1\right){\bf x}_1$. Hence, ${\bf x}_1$ is an eigenvector of ${\bf W}^{HG}{\bf W}^{GH}$. For each eigenvalue $\lambda_w$ of ${\bf W}^{HG}{\bf W}^{GH}$, there are two corresponding eigenvectors ${\bf \lambda} = -\frac 12 \pm \sqrt{\frac 14 -\lambda_w}$. ${\bf x}_2$ can be solved uniquely from the first equation in \eqref{eigeq}. 
\end{enumerate}
Hence, there are $m-d$ degenerate $\lambda = -1$ eigenvalues and $d$ pairs of conjugate eigenvalues ${\bf \lambda} = -\frac 12 \pm \sqrt{\frac 14 -\lambda_w}$, one pair for each eigenvaleue $\lambda_w$ of ${\bf W}^{HG}{\bf W}^{GH}$. Since $\lbrace \lambda_w\rbrace $ are real and positive (we assume ${\bf W}^{HG}$ is full-rank and by definition ${\bf W}^{HG}={{\bf W}^{GH}}^\top$), real parts of all $\lbrace \lambda\rbrace $ are negative and hence the neural dynamics \eqref{dynamicsdef} is globally convergent.

\pagebreak
\section{Modified objective function and neural network for generalized prewhitening}\label{FullGenWhite}

While deriving our online neural algorithm, we assumed that  the number of output channels is reduced to the number of sources at the prewhitening stage ($l=d$). However, our offline analysis did not need such reduction, one could keep $l\geq d$ for generalized prewhitening. Here we provide an online neural algorithm that allows $l \geq d$.

First, we point out why the prewhitening algorithm given in the main text is not adequate for $l > d$. In Appendix \ref{app_convergence}, we proved that the neural dynamics described by \eqref{dynamicsdef} converges to the saddle point of the objective function \eqref{whitenOnlineReduced}. This proof assumes that  ${\bf W}^{HG}$ is full-rank. However, if $l > d$, this assumption breaks down as the network learns because perfectly prewhitened $\delta{\bf H}$ has rank $d$ (low-rank) and a perfectly prewhitening network would have ${\bf W}^{HG}= \delta{\bf H}\delta{\bf G}^\top$ which would also be low-rank. We simulated this network with $l>d$ and observed that the condition number of  ${\bf W}^{HG} {\bf W}^{GH}$ increased with $t$ and the neural dynamics took longer time to converge. Even though the algorithm was still functioning well for practical purposes, we present a modification that fully resolves the problem.

We propose a modified offline objective function \citep{pehlevan2015normative} and a corresponding neural network. Consider the following:
\begin{align}\label{NIPS3_2}
\min_{\delta{\bf H}} \max_{\delta{\bf G}} {\rm Tr}\left(-\delta {\bf X}^\top\delta{\bf X}\,\delta{\bf H}^\top \delta{\bf H} + \delta{\bf H}^\top \delta{\bf H}\,\delta{\bf G}^\top \delta{\bf G} + \alpha t\, \delta{\bf H}^\top \delta{\bf H} - t \, \delta{\bf G}^\top \delta{\bf G}\right),
\end{align}
where $\delta {\bf X}$ is a $k\times t$ centered mixture of $d$ independent sources, $\delta{\bf H}$ is now an $l\times t$ matrix with $l\geq d$, $\delta{\bf G}$ is an $m\times t$ matrix with $m\geq d$ and  $\alpha$ is a positive parameter. Notice the additional $\alpha$-dependent term compared to \eqref{NIPS3}. If $\alpha$ is less than the lowest eigenvalue of $\frac 1t{\bf \delta X}{\bf \delta X}^\top$, optimal $\delta {\bf H}$ is a linear transform of ${\bf X}$ and satisfies the generalized prewhitening condition \eqref{hhbar}\citep{pehlevan2015normative}. More precisely, 
 \begin{Th}[Modified from \citep{pehlevan2015normative}]Suppose an eigen-decomposition of $\delta{\bf X}^\top\delta{\bf X}$ is $\delta{\bf X}^\top\delta{\bf X} = {\bf V}^X {\bf \Lambda}^X {{\bf V}^X}^\top$, where eigenvalues are sorted in order of magnitude.   If $\alpha$ is less than the lowest eigenvalue of $\frac 1t{\bf \delta X}{\bf \delta X}^\top$, all optimal $\delta{\bf H}$ of \eqref{NIPS3} have an SVD decomposition of the form
 \begin{align}
\delta{\bf H}^* ={\bf U}^H \,\sqrt{t}\, {{\bf \Lambda}^H}' \, {{\bf V}^X}^\top,
\end{align}
where  ${{\bf \Lambda}^H}'$ is $l\times t$ with ones at top $d$ diagonals and zeros at rest.
\end{Th}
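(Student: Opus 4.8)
The plan is to dispatch the inner maximization over $\delta{\bf G}$ first, reducing \eqref{NIPS3_2} to a constrained minimization over $\delta{\bf H}$ alone, and then to read off its minimizer from the eigenstructure of $\delta{\bf X}^\top\delta{\bf X}$. Collecting the $\delta{\bf G}$-dependent terms of \eqref{NIPS3_2} gives ${\rm Tr}\!\left(\delta{\bf G}\left(\delta{\bf H}^\top\delta{\bf H}-t{\bf I}_t\right)\delta{\bf G}^\top\right)$, an unconstrained quadratic form in $\delta{\bf G}$: its supremum over $\delta{\bf G}$ is $+\infty$ unless $\delta{\bf H}^\top\delta{\bf H}\preceq t{\bf I}_t$, in which case it equals $0$ (attained at $\delta{\bf G}=0$). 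Hence the outer problem reduces to
\[
\min_{\delta{\bf H}\,:\,\delta{\bf H}^\top\delta{\bf H}\preceq t{\bf I}_t}\ {\rm Tr}\!\left(\left(\alpha t\,{\bf I}_t-\delta{\bf X}^\top\delta{\bf X}\right)\delta{\bf H}^\top\delta{\bf H}\right),
\]
the feasible set being compact, so the minimum is attained and ``all optimal $\delta{\bf H}$'' refers to the argmin of this reduced problem.

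Writing $P:=\delta{\bf H}^\top\delta{\bf H}$ and $M:=\alpha t\,{\bf I}_t-\delta{\bf X}^\top\delta{\bf X}$, this asks to minimize ${\rm Tr}(MP)$ over symmetric $P$ with $0\preceq P\preceq t{\bf I}_t$ and ${\rm rank}(P)\le l$ (the last constraint because $\delta{\bf H}$ is $l\times t$). I would diagonalize $M$ in the eigenbasis ${\bf V}^X$ of $\delta{\bf X}^\top\delta{\bf X}$. Since $\delta{\bf X}={\bf A}\,\delta{\bf S}$ has rank $d$, $\delta{\bf X}^\top\delta{\bf X}$ has eigenvalues $t\mu_1\ge\cdots\ge t\mu_d>0$ together with $t-d$ zeros, where $\mu_1,\dots,\mu_d$ are the nonzero eigenvalues of $\frac1t\delta{\bf X}\delta{\bf X}^\top$; the hypothesis $\alpha<\mu_d$ then makes $M$ have exactly $d$ strictly negative eigenvalues $t(\alpha-\mu_i)$, $i\le d$, and $t-d$ strictly positive eigenvalues $t\alpha$. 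Setting $p_i:=v_i^\top Pv_i$ in this basis, the constraint $0\preceq P\preceq t{\bf I}_t$ forces $p_i\in[0,t]$, and ${\rm Tr}(MP)=\sum_{i\le d}t(\alpha-\mu_i)p_i+\sum_{i>d}t\alpha\,p_i$ decouples into independent linear terms; it is minimized by taking $p_i=t$ on the $d$ negative directions and $p_i=0$ on all the positive ones. This profile is feasible ($P=t\,{\bf V}^X_d{{\bf V}^X_d}^\top$, the scaled orthogonal projector onto the first $d$ columns of ${\bf V}^X$, is PSD, $\preceq t{\bf I}_t$, and of rank $d\le l$), and it is the \emph{unique} minimizer: $p_i=t$ with $t{\bf I}_t-P\succeq0$ forces $Pv_i=tv_i$, while $p_i=0$ with $P\succeq0$ forces $Pv_i=0$, so $P$ is pinned to $t\,{\bf V}^X_d{{\bf V}^X_d}^\top$.

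Finally I would translate $\delta{\bf H}^\top\delta{\bf H}=t\,{\bf V}^X\,{\rm diag}(\underbrace{1,\dots,1}_{d},0,\dots,0)\,{{\bf V}^X}^\top$ into an SVD statement: any $\delta{\bf H}$ with this Gram matrix admits an SVD whose right singular vectors are the columns of ${\bf V}^X$, whose singular values are $\sqrt t$ in the first $d$ slots and $0$ elsewhere, and whose left factor ${\bf U}^H$ is an arbitrary $l\times l$ orthogonal matrix --- precisely the asserted form $\delta{\bf H}^*={\bf U}^H\sqrt t\,{{\bf\Lambda}^H}'\,{{\bf V}^X}^\top$; it also immediately yields $\frac1t\delta{\bf H}^*{\delta{\bf H}^*}^\top={\bf U}^H{\rm diag}(1,\dots,1,0,\dots,0){{\bf U}^H}^\top$, i.e. the generalized prewhitening condition \eqref{hhbar}. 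The step I expect to be the crux --- and the reason the extra $\alpha t\,\delta{\bf H}^\top\delta{\bf H}$ term is needed here but not in the $l=d$ theorem of the main text --- is the \emph{uniqueness} of the optimal $\delta{\bf H}^\top\delta{\bf H}$ when $l>d$: without the $\alpha$ shift, the coefficients of $M$ on the $(t-d)$-dimensional kernel of $\delta{\bf X}^\top\delta{\bf X}$ all vanish, so $P$ is unconstrained there and the optimizer is far from unique (and generically of rank larger than $d$); the strictly positive shift is exactly what turns every such coefficient positive and collapses $P$ onto the rank-$d$ projector. A minor point worth stating carefully is the spectral hypothesis: what is really used is that $\alpha t$ lies strictly between $0$ and the smallest \emph{nonzero} eigenvalue of $\delta{\bf X}^\top\delta{\bf X}$, so that $M$ has the sign pattern ``$d$ negatives, $t-d$ positives''; when $k>d$ the literal ``lowest eigenvalue of $\frac1t\delta{\bf X}\delta{\bf X}^\top$'' is $0$ and should be read as the lowest nonzero one.
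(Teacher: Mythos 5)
Your argument is correct and follows essentially the same route as the paper, which states this theorem without proof (importing it from the cited reference) but sketches exactly this eigenvalue-pairing argument for the analogous main-text theorem: eliminate $\delta{\bf G}$ to recover the constraint $\delta{\bf H}^\top\delta{\bf H}\preceq t{\bf I}_t$, diagonalize in the eigenbasis of $\delta{\bf X}^\top\delta{\bf X}$, and saturate the $d$ directions with negative coefficients. Your additional observations --- that the strict sign pattern induced by $\alpha$ is what pins $\delta{\bf H}^\top\delta{\bf H}$ uniquely to $t\,{\bf V}^X_d{{\bf V}^X_d}^\top$ when $l>d$, and that for $k>d$ the hypothesis must be read as ``$\alpha$ below the smallest \emph{nonzero} eigenvalue of $\frac1t\delta{\bf X}\delta{\bf X}^\top$'' --- are both accurate and fill in details the paper glosses over.
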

Using this cost function, we will derive a neural algorithm which does not suffer from the described convergence issues, even if $l>d$. On the other hand, we now need to choose the parameter $\alpha$, and for that we need to know the spectral properties of ${\delta \bf X}$.

To derive an online algorithm, we repeat the steps taken before:
\begin{align}\label{whitenOnline2_2}
\{\delta{\bf h}_t,\delta{\bf g}_t\} \longleftarrow \mathop {\arg \min }\limits_{\delta{\bf h}_t}  \mathop {\arg \max }\limits_{\delta{\bf g}_t} L(\delta{\bf h}_t,\delta{\bf g}_t),
 \end{align}
 where
\begin{align}\label{whitenOnlineReduced_2}
L =  - 2{\delta{\bf x}^\top_t}\left( {\sum\limits_{t' = 1}^{t - 1} {\delta{\bf x}_{t'}{\delta{\bf h}^\top_{t'}}} } \right)\delta{\bf h}_t -t\left\| \delta{{\bf g}_t} \right\|^2_2 &+ 2{\delta{\bf g}^\top_t}\left( {\sum\limits_{t' = 1}^{t - 1} {\delta{\bf g}_{t'}{\delta{\bf h}^\top_{t'}}} } \right)\delta{\bf h}_t + \alpha t \left\| \delta{{\bf h}_t} \right\|^2_2\nonumber \\
&\qquad\qquad+\left(\left\| \delta{{\bf g}_t} \right\|^2_2 -\left\| \delta{{\bf x}_t} \right\|^2_2\right) \left\| \delta{{\bf h}_t} \right\|^2_2 .
 \end{align}
In the large-$t$ limit, the first four terms dominate over the last term, which we ignore. The remaining objective is strictly concave in $\delta{\bf g}_t$ and strictly convex in $\delta{\bf h}_t$. Note that \eqref{whitenOnlineReduced} was only convex in $\delta{\bf h}_t$ but not strictly convex. The objective has a unique saddle point, even if $ \frac 1t {\sum\limits_{t' = 1}^{t - 1}  \delta{\bf h}_{t'}}{\delta{\bf g}^\top_{t'}} $ is not full-rank:
\begin{align}\label{onlinesaddle_2}
\delta{\bf g}^*_t &= {\bf W}^{GH}_t\delta{\bf h}^*_t, \nonumber \\
\delta{\bf h}^*_{t} &= \left(\alpha {\bf I}+{\bf W}^{HG}_t{\bf W}^{GH}_t\right)^{-1}{\bf W}^{HX}_t\delta{\bf x}_t, 
\end{align}
 where ${\bf W}$ matrices are defined as before and ${\bf I}$ is the identity matrix.

We solve \eqref{whitenOnline2_2} with gradient descent-ascent
\begin{align}\label{dynamicsdef_2}
\frac{d\delta{\bf h}_t}{d\gamma} &= -\frac{1}{2t}\nabla_{{\delta{\bf h}_t}}{L} =- \alpha \delta{\bf h}_t+{\bf W}^{HX}_t\delta{\bf x}_t - {\bf W}^{HG}_t{\bf g}_t, \nonumber \\
\frac{d\delta{\bf g}_t}{d\gamma} &= \frac{1}{2t}\nabla_{{\delta{\bf g}_t}}{L} = - \delta{\bf g}_t + {\bf W}^{GH}_t\delta{\bf h}_t.
\end{align}
where $\gamma$ is time measured within a single time step of $t$.
The dynamics \eqref{dynamicsdef} can be proved to converge to the saddle point \eqref{onlinesaddle_2} modifying the proof in Appendix \ref{app_convergence}\footnote{The fixed point is globally convergent  if and only if the eigenvalues of the matrix
\begin{align}
\left[ \begin{array}{c c} -\alpha{\bf I} & -{\bf W}^{HG} \\ {\bf W}^{GH} & -{\bf I} \end{array} \right]
\end{align}
have negative real parts. One can show that $l-d$ eigenvalues are $-\alpha$, $m-d$ eigenvalues are $-1$, and for each positive eigenvalue, $\lambda_w$ of ${\bf W}^{HG}{\bf W}^{GH}$ one gets a pair $-\frac{1+\alpha}{2}\pm \sqrt{\frac{\left(1+\alpha\right)^2}{4}-\alpha-\lambda_w}$. All eigenvalues have negative real parts.
}. Synaptic weight updates are the same as before \eqref{updatedef}. Finally, this network can be modified to also compute $\bar {\bf H}$ following the steps before.

\pagebreak
\section{Mixing matrices for numerical simulations}\label{mixing}

For the random source dataset, the $d=3$ mixing matrix was:
\begin{align}
A = \left(\begin{array}{rrr} 0.031518 & 0.38793 & 0.061132 \\ -0.78502 & 0.16561 & 0.12458 \\ 0.34782 & 0.27295 & 0.67793 \end{array}\right),
\end{align}
We do not list the mixing matrices for $d=\lbrace 5,7, 10 \rbrace$ cases for space-saving purposes, however they are available from authors upon request.

For the natural scene dataset, the mixing matrix was
\begin{align}
A = \left(\begin{array}{rrrr} 0.33931	& 0.3282	&0.41516	&-0.1638\\
-0.077079	 & -0.29768	&0.076562	&-0.28153 \\  
-0.14119	&-0.41709 	&0.14842	&0.57009 \\  
-0.40483	&0.21922	&0.082336	&0.18027 \end{array}\right).
\end{align}

\pagebreak
\section{Learning rate parameters for numerical simulations}\label{lrate}

For Figs. \ref{Fig3}, \ref{Fig4}, \ref{Fig5} and \ref{Fig6} the following parameters were found to be best performing as a result of our grid search:

\begin{tabular}{c || c | c | c | c |  }	
   & fastICA & NPCA & NSM (activity) & NSM (time)  \\\hline\hline
  $d=3$ & (10, 0.01) & (10, 0.1) & (10, 0.8) & (10, 0.1)   \\ \hline
  $d=5$ &  (100, 0.01) &  (10, 0.01) & (10, 0.9) & (10, 0.01) \\ \hline
  $d=7$ & (100, 0.01) & (100, 0.01) & (10, 0.9) & (10, 0.1)  \\ \hline
  $d=10$ & (100, 0.01) & (1000, 0.01) & (10, 0.9) & (10, 0.01)\\ \hline
  {\rm Images} & ($10^4$, 0.01) & (1000, 0.01) & (100, 0.9) & NA  \\
  \hline  
  
\end{tabular}

\vskip 10pt
\begin{tabular}{c || c | c | c | c |  }	
   &Two-layer & Offline & Infomax ICA & Linsker's Algorithm \\\hline\hline
  $d=3$ &  (100, 1, 10, 0.8) & ($10^6$, 0.001) & (1000, 0.2) & (1000, 0.2) \\ \hline
  $d=5$ &   (100, 1, 10, 0.9) & ($10^6$, 0.001)& (1000, 0.2) & (1000, 0.2)\\ \hline
  $d=7$ &  (100, 1, 10, 0.9) & ($10^6$, 0.01)& (1000, 0.2) & (1000, 0.2) \\ \hline
  $d=10$ &  (100, 1, 10, 0.9) & ($10^6$, 0.01)& (1000, 0.2) & (1000, 0.2)\\ \hline
  {\rm Images}  & (100, 1, 100, 0.9) & NA  & NA & NA\\
  \hline  
  
\end{tabular}

\pagebreak

\end{document}